\def\Box{\leavevmode\vbox{\hrule
     \hbox{\vrule\kern4pt\vbox{\kern4pt}%
           \vrule}\hrule}}
\def\paragraph#1{{\bf #1\ }}
\numberwithin{equation}{section}
\newtheorem{lemma}{Lemma}[section]
\newtheorem{theorem}[lemma]{Theorem}
\newtheorem{definition}[lemma]{Definition}
\newtheorem{proposition}[lemma]{Proposition}
\newtheorem{remark}{Remark}[section]
\title{Self-Organized Hydrodynamics with density-dependent velocity}
\author{Pierre Degond$^1$, Silke Henkes$^2$ and Hui Yu$^3$ }
\date{\vspace{-5ex}}
\begin{document}
\maketitle
\medskip
{\footnotesize
 \centerline{1. Department of Mathematics, Imperial College London}
   \centerline{London, SW7 2AZ, United Kingdom}
 \centerline{pdegond@imperial.ac.uk}
}

\medskip

{\footnotesize
 \centerline{2. Institute for Complex Systems and Mathematical Biology, University of Aberdeen}
   \centerline{Aberdeen, AB24 3UE, United Kingdom}
 \centerline{shenkes@abdn.ac.uk}
}

\medskip
{\footnotesize
 \centerline{3. Institut f\"ur Geometrie und Praktische Mathematik, RWTH Aachen University}
   \centerline{Aachen, 52062, Germany}
 \centerline{hyu@igpm.rwth-aachen.de}
}

\bigskip

\centerline{\emph{Dedicated to Peter Markowich to celebrate 30 years of friendship.}}

\begin{abstract}
Motivated by recent experimental and computational results that show a motility-induced clustering transition in self-propelled particle systems,
we study an individual model and its corresponding Self-Organized Hydrodynamic model for collective behaviour that incorporates a density-dependent velocity, as well as inter-particle alignment.
The modal analysis of the hydrodynamic model elucidates the relationship between the stability of the equilibria and the changing velocity,
and the formation of clusters.
We find, in agreement with earlier results for non-aligning particles, that the key criterion for stability is $(\rho v(\rho))'>0$, i.e. a non-rapid decrease of velocity with density.
Numerical simulation for both the individual and hydrodynamic models with a velocity function inspired by experiment demonstrates the validity of the theoretical results.
\end{abstract}

\medskip
\noindent
{\bf Acknowledgements:}
This work has been supported by the Agence Nationale pour la Recherche (ANR) under grant 'MOTIMO' (ANR-11-MONU-009-01), 
by the Engineering and Physical Sciences Research Council (EPSRC) under grant ref. EP/M006883/1, 
and by the National Science Foundation (NSF) under grant RNMS11-07444 (KI-Net). 
P. D. is on leave from CNRS, Institut de Math\'ematiques, Toulouse, France.
He acknowledges support from the Royal Society and the Wolfson foundation through a Royal Society Wolfson Research Merit Award.
H. Y. wishes to acknowledge the hospitality of the Department of Mathematics, Imperial College London, where this research was conducted. 
P. D. and H. Y. wish to thank F. Plourabou\'e (IMFT, Toulouse, France) for enlighting discussions.

\medskip
\noindent
{\bf AMS Subject classification: }35L60, 35L65, 35P10, 35Q80, 82C22, 82C70, 82C80, 92D50.

\medskip
\noindent
{\bf Key words: }Collective dynamics; active matter; self-organization; hydrodynamic limit; alignment interaction; motility induced phase separation; density-dependent velocity; relaxation model; clustering.

\section{Introduction.}
The study of flocking is inspired by the natural behaviour of animal groups, such as flocks of birds and schools of fishes. 
Natural flocks exhibit a range of states, including moving swarms, compact flocks, correlated turning and enhanced density fluctuations. 
To capture flock properties, numerical flocking models such as the Vicsek model \cite{Vicsek_etal_PRL95} have been developed, and subsequently been studied in great detail; see e.g. \cite{Chate_etal_PRE08, Chate_etal_EPJ08, Czirok_Vicsek_PhysicaA00}. 
The Vicsek model exhibits a complex first order transition \cite{Chate_etal_PRE08} between an aligned flocking state at low noise levels and a disordered state at high noise levels, via a band state that depends sensitively on the detailed implementation \cite{Peshkov_etal_PRL12}. 
In parallel, hydrodynamic models of the aligned state have been proposed \cite{Toner_etal_AnnPhys05, Degond_Motsch_M3AS08}, which show that the aligned state exhibits critical scaling fluctuations. 
In particular, enhanced transverse diffusion is responsible for stabilizing true long-ranged order in the Vicsek model, in contrast to the quasi-long-ranged order found in the XY model and cemented in the Mermin-Wagner theorem \cite{Chaikin_Lubensky_95}.

Missing from the Vicsek model are the effects of excluded volume, and repulsion or attraction between individual animals or agents. 
A flurry of recent numerical, analytical and experimental work in the physics community has begun to investigate the effects of non point-like agents. Using soft self-propelled particles, that is a particle model where in addition to short-range repulsive forces, self-propulsion is introduced as a force into fully overdamped Langevin dynamics, several groups showed \cite{Fily_Marchetti_PRL12, Redner_etal_PRL13} that the mix of self-propulsion and volume exclusion has a profound effect on the system properties. These results were first obtained for non-aligning active particles; in this paper we will investigate the effect of additional alignment. 
At intermediate densities, the chief effect of volume exclusion is a slowdown of the effective hydrodynamic velocity $v(\rho)$ where $\rho$ is the density and $v'(\rho)<0$. 
Fits to simulations of self-propelled hard and soft particles and collision-based models suggest a universal form at low and intermediate densities $v(\rho)=v_0(1-c\rho)$, where $c$ depends only on the P\'eclet number $Pe=v_0/ a\nu_r$, where $a$ is the particle radius and $\nu_r$ is the rotational diffusion constant, or more generally the ratio of persistence rate to diffusion rate.

This density-dependent velocity leads to a density instability, and finally to a clustering transition where the system phase-separates into a (single) cluster and a low density gas phase. This motility induced phase separation (MIPS) transition appears to be of a spinodal decomposition type, in a direct analogy to the liquid-gas transition. The transition line is determined by the P\'eclet number, and it is hypothesized that it terminates in a critical point around $Pe\approx 10$ \cite{Redner_etal_PRL13}. At high density, a second transition branch separates the cluster phase from a dense liquid phase, and ultimately a high density, low driving active glassy phase \cite{Fily_etal_SM14}.

Analytically, MIPS was first proposed in a one-dimensional model of interacting run-and-tumble particles \cite{Tailleur_Cates_PRL08}. By mapping the Fokker-Planck equation onto an equivalent equilibrium equation, Tailleur and Cates were able to define an effective free energy with a spinodal transition analogous to the liquid-gas transition. This theoretical approach was later extended to fully brownian particles and tested numerically \cite{Cates_Tailleur_EPL13}.

The effect of alignment on MIPS was first studied by Farrell et al.  \cite{Farrell_etal_PRL12} using a combination of  hydrodynamic equations derived from a microscopic particle model, and direct numerical simulation. The phase diagram contains both homogeneous and MIPS phases, but also travelling bands and finite clusters. 
However, a full understanding of the phase separation mechanism in the presence of alignment is still lacking. 

Instead of the usually constant speed, we introduce a density-dependent velocity $v(\rho)$ to the Vicsek model with alignment between individuals, and study the dynamics of the high density system both through direct simulation and the Self-Organized Hydrodynamic (SOH) formalism.
In this paper, we focus on the deeply aligned phase, and study in detail the location of the instability line and its angular dependence. In the unstable phase, we determine the unstable eigenmode as a function of wave vector and orientation, and determine its growth rate with perturbations. These results are then compared to a numerical solution of the full SOH equations, and a direct solution of the particle model.

This paper is organized as follows: Section \ref{Sec_part} introduces the particle model with the density dependent velocity.
Section \ref{Sec_SOH} presents the derived hydrodynamic model and studies the stability of the inviscid and viscous cases. 
Finally, Section \ref{Sec_Num} presents numerical results from both the particle and hydrodynamic models and ends with a discussion of the growth rate of the instability. 
The appendices detail the derivation of the hydrodynamic model (Sections \ref{app1} and \ref{app2}), and the numerical scheme used to integrate the SOH equations (Section \ref{app_split}).
Finally we detail in Section \ref{viscous_DFT} the discrete Fourier transform used for the numerical evaluation of the growth rate of the instability.

To derive the SOH equations, we employ the mathematical theory developed in \cite{Degond_Motsch_M3AS08} and provide the proof in Appendix \ref{app1} and \ref{app2}. 
One of the key components of the proof is the concept of ``Generalized Collision Invariance'' (GCI) which allows passing to the hydrodynamic limit in spite of the lack of momentum conservation at the particle level.
Further generalization and elaboration of this method can be found in \cite{Frouvelle_M3AS12, Degond_etal_MAA13}.

Our macroscopic model describes two quantities, the density $\rho(\boldsymbol{x},t) \geq 0$ and the mean orientation $\mathbf{\mathbf{\Omega}}(\boldsymbol{x},t)$ 
at position $\boldsymbol{x}\in \mathbb{R}^n$ and time $t\geq0$, and is referred to as the Self-Organized Hydrodynamic (SOH) model:
\begin{subequations}\label{SOH_Nonconst}
\begin{numcases}{}
\partial_t \rho + \nabla_{\boldsymbol{x}}\cdot(c_1v(\rho)\rho\mathbf{\Omega}) = 0,\\
\rho[\partial_t\mathbf{\Omega} + (c_2v(\rho)\mathbf{\Omega}\cdot\nabla_{\boldsymbol{x}})\mathbf{\Omega}] + d\mathcal{P}_{\mathbf{\Omega}^\perp}\nabla_{\boldsymbol{x}}(v(\rho)\rho)
= \gamma\mathcal{P}_{\mathbf{\Omega}^\perp}\Delta_{\boldsymbol{x}}(\rho\mathbf{\Omega}), \\
|\mathbf{\Omega}| = 1,
\end{numcases}
\end{subequations}
where $c_1, c_2$ and $d$ are dimensionless mobility parameters coarse-grained from the ones at the particle level. 
$\gamma$ has the dimensions of a diffusion constant and expresses that polarization diffuses as a result of microscopic alignment interactions.
$v(\rho)$ is the function which specifies the relation between the speed and the density, and therefore non-negative.
Note that the projection operator $\mathcal{P}_{\mathbf{\Omega}^\perp} = \rm{Id} - \mathbf{\Omega}\otimes\mathbf{\Omega}$ will preserve the geometric constraint $|\mathbf{\Omega}| = 1$.
Using a two-dimensional system, we show below that the stability of the equilibrium is related to the behaviour of the mass flux $\rho v(\rho)$.
More precisely, the inviscid model $(\gamma = 0)$ is hyperbolic and hence stable against perturbations if $(\rho v(\rho))'$, the derivative of $\rho v(\rho)$ with respect to $\rho$, is non-negative or certain constraints on the equilibria are satisfied.
The viscous model $(\gamma \neq 0)$ is stable if and only if $(\rho v(\rho))'\geq 0$.

This result shows that a density instability associated to clustering emerges below a threshold mass flux derivative and also that we obtain steady states with constant density for mass fluxes above this threshold. Our result agrees with the conclusions of \cite{Tailleur_Cates_PRL08,Fily_Marchetti_PRL12, Redner_etal_PRL13,Cates_Tailleur_EPL13, Fily_etal_SM14} for the soft particle and hydrodynamics models where self-propulsion decreases with local density. In common with Farrell et al. \cite{Farrell_etal_PRL12}, we find that the alignment and clustering transitions are largely independent of each other; however we find correlations between orientation of a perturbation and stability which have not previously been explored.
This analysis on the SOH model is supported by numerical simulations with forms of $v(\rho)$ inspired by experiment for both the individual-based model and the SOH model.
In particular, in the high concentration limit, we can decompose the instability as a sum of unstable eigenmodes, the growth rate of which we then study numerically.
Given a steady state for the density, we observe a positive correlation between the growth rate and the stationary polarisation orientation for both the particle and hydrodynamic models.

\section{Particle model with density-dependent velocity.}\label{Sec_part}

In this section, we introduce the particle model which is the starting point for the derivation of the SOH model (\ref{SOH_Nonconst}).
Consider a system of $N$ self-propelled particles in $\mathbb{R}^n$. 
Let $t$ be the time, $\mathbf X_i(t)$ the position of the $i$-th particle and $\mathbf{\omega}_i(t)$ its velocity orientation. 
Then the time evolution for the $i$-th particle is given by
\begin{subequations}\label{IBM}
\begin{numcases}{}
\frac{d \mathbf{X}_i}{dt} = v(m_i)\mathbf{\omega}_i,\\
d\mathbf{\omega}_i = \mathcal{P}_{\mathbf{\omega}_i^{\perp}}(\nu\bar{\mathbf{\omega}}_i dt+ \sqrt{2D}dB_t^i),\\
|\mathbf{\omega}_i| = 1. \label{IBMc}
\end{numcases}
\end{subequations}
Here $\mathcal{P}_{\mathbf{\omega}_i^{\perp}}$ represents the projection on the plane which is perpendicular to $\mathbf{\omega}_i$, which allows the geometrical constraint on $\mathbf{\omega}_i$ (\ref{IBMc}) to be preserved. 
$B_t^i$ is a Brownian motion with noise strength $D$ and 
$\bar{\mathbf{\omega}}_i$ is the mean velocity oriention in the neighborhood $\mathcal{B}_{R_1}(\mathbf X_i) = \{ \mathbf X_j: |\mathbf X_i-\mathbf X_j| \leq R_1\}$, where $R_1$ is the range of alignment. 
More precisely, we have
\[
\bar{\mathbf{\omega}}_i = \frac{\mathcal{J}_i}{|\mathcal{J}_i|} \quad \text{ with } \mathcal{J}_i = \sum_{j=1}^NK_1(\mathbf X_i- \mathbf X_j)\mathbf{\omega}_j,
\]
where $K_1( \mathbf X)$ is the kernel for the alignment with a compact support in $\mathcal{B}_{R_1}$, and the constant parameter $\nu$ is the alignment rate.

The new ingredient of this individual-based model compared to the Vicsek model is the dependency of the velocity on the mass in the neighborhood, 
described by the function $v(m_i)$ where $m_i$ is the density of the neighbourhood of $\mathbf X_i$.
Let $R_2$ be the interaction range and $|\mathcal{B}_{R_2}|$ the volume of the ball $\mathcal{B}_{R_2}(\mathbf X_i) = \{ \mathbf X_j: |\mathbf X_i-\mathbf X_j| \leq R_2\}$.
Then $m_i$ is defined as
\begin{equation}
m_i = \frac{1}{|\mathcal{B}_{R_2}|} \sum_{j=1}^N K_2(\mathbf X_i-\mathbf X_j),
\end{equation}
where $K_2(\mathbf X)$ is a kernel defined on a compact support in $\mathcal{B}_{R_2}$. 
For instance, $K_2$ can be chosen as $\chi_{\mathbf X_i}(|\mathbf X_i - \mathbf X|)$, the characteristic function on $\mathcal{B}_{R_2}(\mathbf X_i)$. 

An intuitive motivation for the density-dependent velocity is as follows: Particles that are not point-like will experience collisions. While in thermal systems this just randomises directions, for active, persistent motion the combination of collision and persistence just slows the particle down, much like a crowd slows a pedestrian who passes through. A simple scaling argument shows that the collision rate is proportional to density, leading to $v(\rho) \approx v_0(1-c \rho)$. 
For aligning systems, like here, the situation is a bit more complex since colliding particles will also align, eventually removing many of the collisions if the system has strong overall polarisation. Here, we keep a generic form of $v(\rho)$, and investigate the onset of instability as a function of wave vector and angle with the polarisation direction.

\section{Stability analysis of the SOH model with density-dependent velocity.}\label{Sec_SOH}

The derivation of the SOH model (\ref{SOH_Nonconst}) is analogous with the work in \cite{Degond_Motsch_M3AS08} and \cite{Degond_etal_CMS15}; for the details please see Appendix \ref{SOH_derivation}. 

We derive the stability criteria for the inviscid and viscous cases in a two-dimensional space in the following subsections.
Consider the SOH model for $\boldsymbol{x} = (x,y) \in \mathbb{R}^2$. 
Since $|\mathbf{\Omega}| = 1$, we define $\mathbf{\Omega}(x,y,t) = (\cos\theta(x,y,t), \sin\theta(x,y,t))$ through the angle function $\theta(x,y,t)$ of the vector $\mathbf{\Omega}$.
Then the projection operator $\mathcal{P}_{\mathbf{\Omega}^\perp}$ becomes a matrix operator 
$\left(\begin{array}{cc} \sin^2\theta & -\sin\theta\cos\theta \\ -\sin\theta\cos\theta &\cos^2\theta\end{array}\right)$.
Without loss of generalization, we scale the system (\ref{SOH_Nonconst}) such that $c_1 = 1$, and arrive at a system of $\rho$ and $\theta$:
\begin{align}\label{viscous_nonlinear}
&\partial_t
\left(\begin{array}{c}
\rho\\\theta\end{array}\right)
+ A_x\partial_x\left(\begin{array}{c}
\rho\\\theta\end{array}\right)
+A_y\partial_y\left(\begin{array}{c}
\rho\\\theta\end{array}\right) \notag\\
= &\left(\begin{array}{c} 
0 \\ \gamma(\partial^2_x\theta + 2\frac{\partial_x\rho}{\rho}\partial_x\theta + \partial^2_y\theta + 2\frac{\partial_y\rho}{\rho}\partial_y\theta)
\end{array}\right),
\end{align}
where the two matrices $A_x(\rho,\theta)$ and $A_y(\rho,\theta)$ are given by
\begin{align*}
&A_x(\rho,\theta) = \left(\begin{array}{cc}
\tilde{v}'(\rho)\cos\theta & -\tilde{v}(\rho)\sin\theta\\
-d\frac{\tilde{v}'(\rho)}{\rho}\sin\theta & c_2\frac{\tilde{v}(\rho)}{\rho}\cos\theta\end{array}\right), \\
&A_y(\rho,\theta) = \left(\begin{array}{cc}
\tilde{v}'(\rho)\sin\theta & \tilde{v}(\rho)\cos\theta\\
d\frac{\tilde{v}'(\rho)}{\rho}\cos\theta & c_2\frac{\tilde{v}(\rho)}{\rho}\sin\theta\end{array}\right),
\end{align*}
and $\tilde{v}(\rho) = \rho v(\rho)$ and $\tilde{v}'(\rho)$ is the derivative of $\tilde{v}$ with respect to $\rho$.

To simplify, suppose that $(\rho,\theta)$ depends only on $x$, i.e., 
we are interested in the propagation of waves with arbitrary orientation $\mathbf{\Omega}$ in the horizontal direction.
Let $(\rho_s(x), \theta_s(x))$ denote the equilibrium solutions and they must satisfy the following system:
\begin{subequations}
\begin{numcases}{}
\partial_x(\tilde{v}(\rho_s)\cos\theta_s ) = 0,\\
\left[\left(\frac{c_2}{d} + 1\right)\cos^2\theta_s - 1\right]\tilde{v}(\rho_s)\partial_x\theta_s 
= \frac{\gamma}{d\rho_s}\cos\theta_s\partial_x(\rho_s^2\partial_x\theta_s).
\end{numcases}
\end{subequations}
Let us then expand around them with a small perturbation parameter $\sigma$:
\[
\rho(x,t) = \rho_s(x) + \sigma\rho_{\sigma}(x,t), \quad
\theta(x,t) = \theta_s(x) + \sigma\theta_{\sigma}(x,t).
\]
Dropping the higher order terms $\mathcal{O}(\sigma^2)$ and using $(\rho_\sigma, \theta_\sigma)$ to represent the first order perturbation, we arrive at the system below 
\begin{equation}\label{linear_viscous}
\partial_t
\left(\begin{array}{c}
\rho_\sigma\\\theta_\sigma\end{array}\right)
+ A_x(\rho_s, \theta_s)\partial_x\left(\begin{array}{c}
\rho_\sigma\\\theta_\sigma\end{array}\right)
= \left(\begin{array}{c} 
0 \\ \gamma\partial^2_x\theta_\sigma
\end{array}\right).
\end{equation}

\subsection{The inviscid model, i.e., $\gamma = 0$.}

The condition for the linearised system \eqref{linear_viscous} with $\gamma = 0$ to be hyperbolic, i.e. with perturbations that decay back to stability, is
\begin{theorem}
The system \eqref{linear_viscous} with $\gamma = 0$ is hyperbolic if $\tilde{v}'(\rho_s) \geq 0$ or 
\[
\tilde{v}'(\rho_s) < 0 \quad \text{ and } \quad \frac{\left(\tilde{v}'(\rho_s) - c_2\frac{\tilde{v}(\rho_s)}{\rho_s}\right)^2}{-4d\frac{\tilde{v}'(\rho_s)\tilde{v}(\rho_s)}{\rho_s}} \geq \tan^2\theta_s.
\] 
\end{theorem}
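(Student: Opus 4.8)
The plan is to observe that, with $\gamma = 0$, the system \eqref{linear_viscous} is a first-order linear system $\partial_t U + A_x(\rho_s,\theta_s)\partial_x U = 0$ for $U = (\rho_\sigma,\theta_\sigma)^T$, with coefficient matrix frozen at the equilibrium. Hyperbolicity then reduces to the purely algebraic question of whether the $2\times2$ matrix $A_x(\rho_s,\theta_s)$ has real eigenvalues (with a complete set of eigenvectors). First I would form the characteristic polynomial $\lambda^2 - \mathrm{tr}(A_x)\lambda + \det(A_x)$ and examine its discriminant. Writing $A_x = \begin{pmatrix} a & b \\ c & e \end{pmatrix}$, the discriminant is most conveniently expressed as $\mathrm{tr}^2 - 4\det = (a-e)^2 + 4bc$, so the eigenvalues are real precisely when $(a-e)^2 + 4bc \geq 0$.

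The next step is to substitute the entries and simplify. Here $a - e = \cos\theta_s\bigl(\tilde{v}'(\rho_s) - c_2\tilde{v}(\rho_s)/\rho_s\bigr)$, so the first term contributes a factor $\cos^2\theta_s$ in front of a perfect square, while multiplying the two off-diagonal entries gives $bc = d\,\tilde{v}(\rho_s)\tilde{v}'(\rho_s)\sin^2\theta_s/\rho_s$ (the two minus signs cancel). The discriminant therefore becomes
\[
\Delta = \cos^2\theta_s\left(\tilde{v}'(\rho_s) - c_2\frac{\tilde{v}(\rho_s)}{\rho_s}\right)^2 + 4d\,\frac{\tilde{v}(\rho_s)\tilde{v}'(\rho_s)}{\rho_s}\sin^2\theta_s ,
\]
and the claim amounts to characterizing when $\Delta \geq 0$.

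I would then split into the two cases of the statement. When $\tilde{v}'(\rho_s) \geq 0$, both terms of $\Delta$ are non-negative, since $\tilde{v}(\rho_s) = \rho_s v(\rho_s) \geq 0$, $d>0$ and $\rho_s>0$; hence $\Delta \geq 0$ automatically and the system is hyperbolic. When $\tilde{v}'(\rho_s) < 0$ the second term is negative while the coefficient $-4d\,\tilde{v}'(\rho_s)\tilde{v}(\rho_s)/\rho_s$ is strictly positive; rearranging $\Delta \geq 0$ and dividing by this positive quantity and by $\cos^2\theta_s$ yields exactly the stated inequality with $\tan^2\theta_s$ on the right-hand side.

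The computation itself is routine; the points needing care are the bookkeeping of signs (this is where the hypotheses $v(\rho)\geq 0$ and $d,\rho_s>0$ enter to make Case 1 automatic) and the boundary behaviour. In particular, when $\cos\theta_s = 0$ the division above is illegitimate, but then $\Delta = 4d\,\tilde{v}(\rho_s)\tilde{v}'(\rho_s)\sin^2\theta_s/\rho_s < 0$ for $\tilde{v}'(\rho_s)<0$, consistent with the stated condition failing as $\tan^2\theta_s \to \infty$. The one genuinely delicate point I anticipate is the equality locus $\Delta = 0$, where the two eigenvalues coincide: there one must either verify that $A_x$ remains diagonalizable or interpret hyperbolicity in the weak (real-eigenvalue) sense, which is what the inequality $\geq$ in the statement suggests is intended.
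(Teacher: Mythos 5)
Your proposal is correct and follows essentially the same route as the paper: compute the characteristic polynomial of $A_x(\rho_s,\theta_s)$, reduce hyperbolicity to the sign of the discriminant $\Delta = \cos^2\theta_s\bigl(\tilde{v}'(\rho_s)-c_2\tilde{v}(\rho_s)/\rho_s\bigr)^2 + 4d\,\tilde{v}(\rho_s)\tilde{v}'(\rho_s)\sin^2\theta_s/\rho_s$, and split on the sign of $\tilde{v}'(\rho_s)$. Your additional remarks on the degenerate cases ($\cos\theta_s=0$ and the equal-eigenvalue locus $\Delta=0$) are sound points of care that the paper glosses over, but they do not change the argument.
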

\begin{proof} We examine the hyperbolicity of the above system by looking at the eigenvalues $\lambda$ of the matrix $A_x(\rho_s, \theta_s)$. 
Neglecting the subscript $s$ of $(\rho_s, \theta_s)$, the equation $|A_x-\lambda{\rm Id}| = 0$ gives
\begin{equation}
\lambda^2 - \left(\tilde{v}'(\rho) + c_2\frac{\tilde{v}(\rho)}{\rho}\right)\cos\theta\lambda
+ \frac{\tilde{v}'(\rho)\tilde{v}(\rho)}{\rho}(c_2\cos^2\theta-d\sin^2\theta) = 0,
\end{equation}
where the discriminant 
\begin{equation}
\Delta_x = \left(\tilde{v}'(\rho) - c_2\frac{\tilde{v}(\rho)}{\rho}\right)^2\cos^2\theta
+ 4\frac{\tilde{v}'(\rho)\tilde{v}(\rho)}{\rho}d\sin^2\theta.
\end{equation}
If $\tilde{v}'(\rho)\geq 0$, in other words, if $\rho v(\rho)$ is an nondecreasing function, 
the system (\ref{linear_viscous}) with $\gamma = 0$ possesses two real eigenvalues and is hence always hyperbolic. 
Otherwise we obtain two real eigenvalues if 
\begin{align}\label{inviscid_Delta_neg}
\Delta_x \geq 0 \quad \Longleftrightarrow \quad
\frac{\left(\tilde{v}'(\rho) - c_2\frac{\tilde{v}(\rho)}{\rho}\right)^2}{-4d\frac{\tilde{v}'(\rho)\tilde{v}(\rho)}{\rho}} \geq \tan^2\theta.
\end{align} 
\end{proof}

\begin{remark} 
An analogous relation can be derived for a perturbation in only the $y$-direction, where the right-hand side of the inequality \eqref{inviscid_Delta_neg} is replaced by $\cot^2 \theta$. The conclusion here is that the result above is generic, i.e. independent of the direction of perturbation, if we define $\theta$ as the angle between the direction of perturbation and the direction of $\mathbf{\Omega}$.
\end{remark}

\subsection{The viscous case, i.e., $\gamma \neq 0$.}\label{Sec_Viscous}

We have the following result:
\begin{theorem}\label{thm_viscous_stability}
The system \eqref{linear_viscous} is stable around the zero solutions if $\tilde{v}'(\rho_s)\geq 0$ and unstable otherwise.
\end{theorem}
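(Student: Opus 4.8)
My plan is to carry out a normal-mode (von Neumann) stability analysis of the constant-coefficient system \eqref{linear_viscous}, treating $A_x(\rho_s,\theta_s)$ as a frozen matrix evaluated at a spatially homogeneous equilibrium. First I would insert the plane-wave ansatz $(\rho_\sigma,\theta_\sigma)=(\hat\rho,\hat\theta)\,e^{ikx+st}$ with $k\in\mathbb{R}$ into \eqref{linear_viscous}, so that $\partial_x\mapsto ik$, $\partial_x^2\mapsto -k^2$, $\partial_t\mapsto s$ and the diffusion term becomes $-\gamma k^2\hat\theta$. A nontrivial mode then requires
\[
\det\begin{pmatrix} s+ik a_{11} & ik a_{12}\\ ik a_{21} & s+ik a_{22}+\gamma k^2\end{pmatrix}=0,
\]
where $a_{ij}$ are the entries of $A_x(\rho_s,\theta_s)$. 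Expanding yields a quadratic dispersion relation $s^2+Bs+C=0$ with complex coefficients $B=\gamma k^2+ikT$ and $C=-k^2 D+i\gamma k^3 a_{11}$, where $T=\operatorname{tr}A_x$ and $D=\det A_x$ are real and $a_{11}=\tilde v'(\rho_s)\cos\theta_s$ is real. The zero solution is spectrally stable iff every root satisfies $\operatorname{Re}s\le 0$ for all real $k$.

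The crux, and the one genuinely non-routine ingredient, is a stability test for a quadratic with \emph{complex} coefficients, since the familiar real Routh--Hurwitz criterion does not apply. Writing $B=b_1+ib_2$, $C=c_1+ic_2$, I would establish (either by solving $s_\pm=\tfrac12(-B\pm\sqrt{B^2-4C})$ and controlling $\operatorname{Re}\sqrt{B^2-4C}$, or by invoking the complex Routh--Hurwitz/Bilharz criterion) that both roots lie in the closed left half-plane iff
\[
b_1\ge 0\qquad\text{and}\qquad b_1^2 c_1+b_1 b_2 c_2-c_2^2\ge 0 .
\]
The first condition is immediate since $b_1=\gamma k^2\ge 0$ with $\gamma>0$.

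Substituting $b_1=\gamma k^2$, $b_2=kT$, $c_1=-k^2 D$, $c_2=\gamma k^3 a_{11}$ into the second condition and factoring out $\gamma^2 k^6>0$ collapses it, via the algebraic identity $T a_{11}-a_{11}^2=a_{11}a_{22}$ together with $D=a_{11}a_{22}-a_{12}a_{21}$, to the strikingly simple, $k$-independent inequality
\[
-D+Ta_{11}-a_{11}^2=a_{12}a_{21}\ge 0 .
\]
I would then compute $a_{12}a_{21}=d\,\dfrac{\tilde v(\rho_s)\tilde v'(\rho_s)}{\rho_s}\sin^2\theta_s$ and observe that $d>0$, $\tilde v(\rho_s)=\rho_s v(\rho_s)\ge 0$ and $\rho_s>0$ force $\operatorname{sign}(a_{12}a_{21})=\operatorname{sign}(\tilde v'(\rho_s))$ whenever $\sin\theta_s\neq 0$. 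Hence the criterion holds for every $k$ precisely when $\tilde v'(\rho_s)\ge 0$, giving stability; and when $\tilde v'(\rho_s)<0$ the second condition is strictly violated for all $k\neq 0$, so the root with larger real part has $\operatorname{Re}s>0$, producing an exponentially growing mode and hence instability.

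I expect the main obstacle to be twofold: correctly formulating and proving the complex-coefficient quadratic stability criterion above (the real test is inapplicable here), and handling the degenerate cases cleanly. In particular, the aligned direction $\sin\theta_s=0$ gives $a_{12}a_{21}=0$ for every $\rho_s$, so that mode is merely neutral regardless of the sign of $\tilde v'$; the instability in the case $\tilde v'(\rho_s)<0$ must therefore be read as the existence of a growing mode for generic perturbation orientations $\sin\theta_s\neq 0$, while stability for $\tilde v'(\rho_s)\ge 0$ is uniform over all $k$ and all $\theta_s$. The $k=0$ mode (all coefficients vanishing, $s=0$) and the marginal threshold $\tilde v'(\rho_s)=0$ likewise require the non-strict form of the criterion. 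The pleasant feature driving the whole argument is that the mobility parameter $c_2$ and the wavenumber $k$ drop out entirely, so the stability threshold is dictated solely by the sign of $(\rho v(\rho))'$.
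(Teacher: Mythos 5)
Your proposal is correct, and it reaches the theorem by a genuinely different route than the paper. The paper also Fourier-transforms in $x$, but then works directly with the eigenvalues $\lambda$ of the (complex) symbol matrix $A_\xi$: it writes the discriminant as $\Delta=(a+ib)^2+e$ with $a=(\tilde v'-c_2\tilde v/\rho_s)\cos\theta_s$, $b=\gamma\xi$, $e=4d\,\tilde v\tilde v'\sin^2\theta_s/\rho_s$, sets $\sqrt{\Delta}=\alpha+i\beta$, and reduces stability to the comparison $|b|\ge|\beta|$, which it settles through the identity $e=(\beta^2+a^2)\bigl(|b/\beta|^2-1\bigr)$ obtained from $a^2-b^2+e=\alpha^2-\beta^2$ and $ab=\alpha\beta$. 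You instead apply the complex Routh--Hurwitz (Bilharz) test to the dispersion relation $s^2+Bs+C=0$; after the substitution the criterion collapses to $a_{12}a_{21}\ge 0$, which is exactly $e/4\ge 0$, so the two arguments pivot on the same quantity. Your version buys a cleaner, manifestly $k$-independent threshold and makes transparent why $c_2$ and the wavenumber drop out; the price is that you must state and justify the complex-coefficient quadratic criterion (which you correctly identify as the non-routine ingredient, and which does hold in the form you give, including the boundary characterization via purely imaginary roots). The paper's more hands-on square-root computation has the side benefit of yielding the explicit growth rate and its large-$\xi$ limit, which it exploits in the subsequent remark. Your observation that the aligned direction $\sin\theta_s=0$ is only neutrally stable even when $\tilde v'(\rho_s)<0$ is accurate and is in fact a degenerate case ($e=0$) that the paper's statement of ``unstable otherwise'' glosses over; flagging it strengthens rather than weakens the argument.
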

\begin{proof}
We apply a Fourier transform from position variable $x$ to wave number $\xi$ to $(\rho_\sigma, \theta_\sigma)$:
\begin{align*}
\rho_\sigma(x,t) = \frac{1}{2\pi}\int_{-\infty}^\infty\hat\rho_\sigma(\xi,t) e^{i\xi x}\,d\xi
\quad\text{ with }\quad \hat\rho_\sigma(\xi,t) = \int_{-\infty}^\infty\rho_\sigma(x,t)e^{-i\xi x}\,dx,\\
\theta_\sigma(x,t) = \frac{1}{2\pi}\int_{-\infty}^\infty\hat\theta_\sigma(\xi,t) e^{i\xi x}\,d\xi
\quad\text{ with }\quad \hat\theta_\sigma(\xi,t) = \int_{-\infty}^\infty\theta_\sigma(x,t)e^{-i\xi x}\,dx.
\end{align*}
The system of $(\hat{\rho_\sigma}, \hat{\theta_\sigma})$ in a matrix form is
\begin{align}\label{sys_FT}
\partial_t\left(\begin{array}{c}\hat\rho_\sigma\\\hat\theta_\sigma\end{array}\right)
+i\xi A_\xi(\rho_s,\theta_s) \left(\begin{array}{c}\hat\rho_\sigma\\\hat\theta_\sigma\end{array}\right) = \boldsymbol{0},
\end{align}
where
\begin{align*}
A_\xi(\rho_s,\theta_s) = \left(\begin{array}{cc}
\tilde{v}'(\rho_s)\cos\theta_s & -\tilde{v}(\rho_s)\sin\theta_s \\
-d\frac{\tilde{v}'(\rho_s)}{\rho_s}\sin\theta_s &
- i\gamma\xi + c_2\frac{\tilde{v}(\rho_s)}{\rho_s}\cos\theta_s
\end{array}\right).
\end{align*}
The stability of the system (\ref{linear_viscous}) is equivalent to requiring that ${\rm Im}\lambda$, the imaginary part of the eigenvalues of $A_\xi$, and $\xi$ have the opposite signs, so that the decay constant of the perturbation is negative. 
Note that $|A_\xi - \lambda {\rm Id}| = 0$ gives
\begin{align}
0=\lambda^2 &- \left[\Big(\tilde{v}'(\rho_s) + c_2\frac{\tilde{v}(\rho_s)}{\rho_s}\Big)\cos\theta_s
- i\gamma\xi\right]\lambda \nonumber\\
 &+ \tilde{v}'(\rho_s)\cos\theta_s\left(-i\gamma\xi + c_2\frac{\tilde{v}(\rho_s)}{\rho_s}\cos\theta_s\right)
 - d\frac{\tilde{v}(\rho_s)\tilde{v}'(\rho_s)}{\rho_s}\sin^2\theta_s.
\end{align}
The discriminant is
\begin{align*}
\Delta = \left[\Big(\tilde{v}'(\rho_s) - c_2\frac{\tilde{v}(\rho_s)}{\rho_s}\Big)\cos\theta_s
+ i\gamma\xi\right]^2 
+ 4d \frac{\tilde{v}(\rho_s)\tilde{v}'(\rho_s)}{\rho_s}\sin^2\theta_s. 
\end{align*}
and the eigenvalues $\lambda$ can be written as 
\begin{align}\label{eq_lambda}
\lambda 
&=\frac12\left[\Big(\tilde{v}'(\rho_s) + c_2\frac{\tilde{v}(\rho_s)}{\rho_s}\Big)\cos\theta_s \pm {\rm Re}\sqrt{\Delta}
+ i(\pm{\rm Im}\sqrt{\Delta} - \gamma\xi)\right],
\end{align}
where $\sqrt{\Delta}$ denotes the square root of the complex number $\Delta$. 

Next we examine the sign of ${\rm Im}\lambda$ by comparing the values $|{\rm Im}\sqrt{\Delta}|$ and $|\gamma\xi|$ since
\[
{\rm Im}\lambda = {\rm Sign}(\xi) \big(\pm |{\rm Im}\sqrt{\Delta}| - |\gamma\xi|\big),
\]
where ${\rm Sign}(\cdot)$ is the sign function for any real number.
We introduce several notations:
\[
a = \left(\tilde{v}'(\rho_s) - c_2\frac{\tilde{v}(\rho_s)}{\rho_s}\right)\cos\theta_s, \;
b = \gamma\xi, \; e = 4d \frac{\tilde{v}(\rho_s)\tilde{v}'(\rho_s)}{\rho_s}\sin^2\theta_s,
\text{ and } \sqrt{\Delta} = \alpha + i\beta.
\]
The discriminant can be written as $\Delta = (a + ib)^2 + e$, and
${\rm Im}\lambda = {\rm Sign}(\xi) (\pm |\beta| - |b|)$. 
If $|b| \geq |\beta|$, then ${\rm Im}\lambda$ alway has the opposite sign to $\xi$ and the system is stable. 
Otherwise, it is instable.

Using the expressions for $\Delta$ and $\sqrt{\Delta}$, we obtain the equalities for the real and imaginary parts of $\Delta$:
\[
a^2 - b^2 + e = \alpha^2 - \beta^2 \quad \text{ and } \quad ab = \alpha\beta.
\]
There exist three cases:
\begin{enumerate}
\item[(i)] $\beta = 0$. Then $\sqrt{\Delta} = \alpha$. Hence the system possesses two real eigenvalues and is alway stable.
\item[(ii)] $\beta \neq 0$ and $a = 0$. It follows that $\alpha = 0$ and $b^2 - e = \beta^2$. 
Moreover, $a=0$ implies that $\tilde{v}'(\rho_s) - c_2\frac{\tilde{v}(\rho_s)}{\rho_s} = 0$ or $\cos\theta_s = 0$. 
In either case, we have 
\[
\tilde{v}'(\rho) \geq 0 \Longrightarrow e \geq 0 \Longrightarrow |b| \geq |\beta|,
\]
and
\[
\tilde{v}'(\rho) < 0 \Longrightarrow e < 0 \Longrightarrow |b| < |\beta|,
\]
\item[(iii)] $\beta \neq 0$ and $a \neq 0$. We have
\[
\left|\frac{b}{\beta}\right| = \left|\frac{\alpha}{a}\right|
\quad \text{ and } \quad 
e = \beta^2\Big(\left|\frac{b}{\beta}\right|^2 - 1\Big) + a^2\Big(\left|\frac{\alpha}{a}\right|^2-1\Big) 
= (\beta^2+a^2)\Big(\left|\frac{b}{\beta}\right|^2 - 1\Big).
\]
If $\tilde{v}'(\rho) \geq 0$, then $e \geq 0$ and $|b| \geq |\beta|$. 
If $\tilde{v}'(\rho) < 0$, then $e < 0$ and $|b| < |\beta|$.
\end{enumerate}
To summarize, if $\tilde{v}'(\rho) \geq 0$, $\pm |\beta| - |b|\leq 0$ for any $\xi$ and ${\rm Im}\lambda$ has a different sign from $\xi$.
If $\tilde{v}'(\rho) < 0$, there is always one eigenvalue $\lambda$ that has the same sign as $\xi$ and sustains the unstable eigenmode, and hence the overall system will be unstable.
\end{proof}

\begin{remark}
Usually, diffusive models are stable for large values of $\xi$ because diffusion becomes stronger. Note that this is not the case here.
Indeed, when the system is unstable, i.e., $\tilde v'(\rho_s) < 0$, 
the unstable mode associated with the eigenvalue $\lambda$ (where $\xi{\rm Im}\lambda>0$) grows with an exponential rate $\xi{\rm Im}\lambda$. 
Moreover,
\begin{equation}\label{growth_rate_infty}
\lim_{\xi \to \infty} \xi{\rm Im}\lambda = - \frac{d\tilde v(\rho_s)\tilde v'(\rho_s)\sin^2\theta_s}{\gamma\rho_s}.
\end{equation}
In order to prove the above limit as $\xi \to \infty$, we study the Taylor expansion of $\sqrt{\Delta}$ around $\frac{1}{\xi}$.
We employ the notations introduced in the proof of Theorem \ref{thm_viscous_stability} and write $\Delta$ in the polar representation:
\[
\Delta = [(a^2 + e -\gamma^2\xi^2)^2 + 4a^2\gamma^2\xi^2]^{\frac12} e^{i\theta_\Delta},
\]
where
\[
\theta_\Delta = \arctan\left(\frac{2a\gamma\xi}{a^2+e-\gamma^2\xi^2}\right) + \pi.
\]
Then
\begin{align*}
{\rm Re}\sqrt{\Delta} &= \pm[(a^2 + e -\gamma^2\xi^2)^2 + 4a^2\gamma^2\xi^2]^{\frac14}\cos\left(\frac{\theta_\Delta}{2}\right),\\
{\rm Im}\sqrt{\Delta} &= \pm[(a^2 + e -\gamma^2\xi^2)^2 + 4a^2\gamma^2\xi^2]^{\frac14}\sin\left(\frac{\theta_\Delta}{2}\right).
\end{align*}
We are only interested in the imaginary part as $\xi \to \infty$. The Taylor expansion at $\frac{1}{\xi}$ gives
\begin{align*}
{\rm Im}\sqrt{\Delta} &= \pm\xi\left[\gamma + \frac{a^2-e}{2\gamma}\left(\frac{1}{\xi}\right)^2 + \mathcal O\left(\frac{1}{\xi^3}\right)\right]
\left[1 - \frac{a^2}{2\gamma^2}\left(\frac{1}{\xi}\right)^2 + \mathcal O\left(\frac{1}{\xi^3}\right)\right] \\
&= \pm\left[\gamma\xi - \frac{e}{2\gamma}\frac{1}{\xi} + \mathcal O\left(\frac{1}{\xi^2}\right)\right].
\end{align*}
For the unstable mode, we have
\[
\xi{\rm Im}\lambda = -\frac{e}{4\gamma} + \mathcal O\left(\frac{1}{\xi}\right).
\]
In addition, Eq. \eqref{growth_rate_infty} tells us that in the limit we already took, $\xi{\rm Im}\lambda$ is increasing with respect to $\theta_s$ for $\rho_s$ fixed. 

As $\xi \to 0$, note that ${\rm Im}\sqrt{\Delta}$ is approaching zero as well, which indicates a slow growth rate of the unstable modes for the long wavelength.
\end{remark}

\section{Numerical results of the microscopic and macroscopic models.}\label{Sec_Num}

The particle model is solved using the circle method that updates the angle of $\omega_i$ at each discrete time step; the reader can refer to \cite{Motsch_Navoret_MMS11} for details.
The challenge in the numerical resolution of the SOH model is caused by the geometric constraint $|\mathbf{\Omega}| = 1$ and hence the non-conservative nature of the system (\ref{SOH_Nonconst}).
The splitting scheme proposed in \cite{Motsch_Navoret_MMS11} solves the relaxation problem of (\ref{SOH_Nonconst}) in the sense that the norm of $\mathbf{\Omega}$ is initially not restricted to be 1, but then takes the zero limit of an expansion parameter $\eta$ which realizes the geometric constraint on $\mathbf{\Omega}$. 
We will extend this idea to the SOH model, starting with the following proposition of the relaxation model.
\begin{proposition}
Let $\eta$ be a scalar parameter and $(\rho^\eta, \mathbf{\Omega}^\eta)$ the solutions to the relaxation model
\begin{subequations}\label{SOH_relax}
\begin{numcases}{}
\partial_t \rho^\eta + \nabla_{\boldsymbol{x}}\cdot(c_1v(\rho^\eta)\rho^\eta\mathbf{\Omega}^\eta) = 0,\\
\partial_t(\rho^\eta\mathbf{\Omega}^\eta) + \nabla_{\boldsymbol{x}}\cdot(c_2v(\rho^\eta)\rho^\eta\mathbf{\Omega}^\eta\otimes\mathbf{\Omega}^\eta) 
+ d\nabla_{\boldsymbol{x}}(v(\rho^\eta)\rho^\eta) \nonumber \\
\hspace{3cm} 
 - \gamma\Delta_{\boldsymbol{x}}(\rho^\eta\mathbf{\Omega}^\eta)
= \frac{\rho^\eta}{\eta}(1-|\mathbf{\Omega}^\eta|^2)\mathbf{\Omega}^\eta. \label{SOH_relax_b}
\end{numcases}
\end{subequations}
Then $(\rho^\eta, \mathbf{\Omega}^\eta)$ converges to the solutions of the SOH model (\ref{SOH_Nonconst}) as $\eta$ goes to zero.
\end{proposition}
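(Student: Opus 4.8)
The plan is to carry out a formal Hilbert expansion in the small parameter $\eta$ and show that, at each order, the limiting profile satisfies the SOH system \eqref{SOH_Nonconst}. I would postulate expansions $\rho^\eta = \rho^0 + \eta\rho^1 + \mathcal{O}(\eta^2)$ and $\mathbf{\Omega}^\eta = \mathbf{\Omega}^0 + \eta\mathbf{\Omega}^1 + \mathcal{O}(\eta^2)$, insert them into \eqref{SOH_relax}, and collect terms according to powers of $\eta$. The only term carrying a negative power of $\eta$ is the penalization $\frac{\rho^\eta}{\eta}(1-|\mathbf{\Omega}^\eta|^2)\mathbf{\Omega}^\eta$ on the right-hand side of \eqref{SOH_relax_b}, so the whole argument hinges on controlling this term as $\eta\to 0$.

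First I would read off the order $\eta^{-1}$ balance in \eqref{SOH_relax_b}, namely $\rho^0(1-|\mathbf{\Omega}^0|^2)\mathbf{\Omega}^0 = 0$; since $\rho^0\ge 0$ is not identically zero, this forces $|\mathbf{\Omega}^0| = 1$ and recovers the geometric constraint. Expanding $1-|\mathbf{\Omega}^\eta|^2 = -2\eta\,\mathbf{\Omega}^0\cdot\mathbf{\Omega}^1 + \mathcal{O}(\eta^2)$ then shows that the penalization has a finite limit $-2\rho^0(\mathbf{\Omega}^0\cdot\mathbf{\Omega}^1)\mathbf{\Omega}^0$, which is parallel to $\mathbf{\Omega}^0$; the unknown scalar $\mathbf{\Omega}^0\cdot\mathbf{\Omega}^1$ then plays the role of the Lagrange multiplier enforcing $|\mathbf{\Omega}| = 1$.

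Next, at order $\eta^0$, the continuity equation passes to the limit directly and yields the first SOH equation. For the momentum equation, the key point is that its limit $L^0$ equals the limiting penalization, hence is parallel to $\mathbf{\Omega}^0$; applying $\mathcal{P}_{\mathbf{\Omega}^{0\perp}}$ therefore annihilates the right-hand side. I would then expand the conservative flux via $\nabla_{\boldsymbol{x}}\cdot(\tilde v(\rho^0)\mathbf{\Omega}^0\otimes\mathbf{\Omega}^0) = \nabla_{\boldsymbol{x}}\cdot(\tilde v(\rho^0)\mathbf{\Omega}^0)\,\mathbf{\Omega}^0 + \tilde v(\rho^0)(\mathbf{\Omega}^0\cdot\nabla_{\boldsymbol{x}})\mathbf{\Omega}^0$ and use that, under $|\mathbf{\Omega}^0|=1$, both $\partial_t\mathbf{\Omega}^0$ and $(\mathbf{\Omega}^0\cdot\nabla_{\boldsymbol{x}})\mathbf{\Omega}^0$ are already orthogonal to $\mathbf{\Omega}^0$. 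The component parallel to $\mathbf{\Omega}^0$, carrying $\nabla_{\boldsymbol{x}}\cdot(\tilde v\mathbf{\Omega}^0)\,\mathbf{\Omega}^0$ and $\partial_t\rho^0\,\mathbf{\Omega}^0$, is precisely what the projection cancels, while the orthogonal component reproduces the second SOH equation after noting $c_2 v\rho = c_2\tilde v$ with $\tilde v = \rho v(\rho)$.

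The hard part will be to make this rigorous rather than merely formal: one must obtain $\eta$-uniform a priori estimates, in particular a uniform bound on $\frac{\rho^\eta}{\eta}(1-|\mathbf{\Omega}^\eta|^2)$ so that the Lagrange multiplier converges, together with enough compactness to pass to the limit in the nonlinear fluxes $v(\rho^\eta)\rho^\eta\mathbf{\Omega}^\eta$ and $v(\rho^\eta)\rho^\eta\mathbf{\Omega}^\eta\otimes\mathbf{\Omega}^\eta$ and in the second-order term $\Delta_{\boldsymbol{x}}(\rho^\eta\mathbf{\Omega}^\eta)$. I expect the statement to be understood in the same formal sense as in \cite{Motsch_Navoret_MMS11}, so the bulk of the work is the algebraic verification of the orthogonal projection above; a fully rigorous proof would in addition require well-posedness and uniform bounds for the relaxation system, which is delicate because of the non-conservative constraint and the competing diffusion and penalization.
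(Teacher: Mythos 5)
Your proposal is correct and follows essentially the same route as the paper: the paper's proof is a two-line sketch observing that the right-hand side of \eqref{SOH_relax_b} is parallel to $\mathbf{\Omega}^\eta$ and deferring the rest to \cite{Motsch_Navoret_MMS11}, which is exactly the mechanism your Hilbert expansion makes explicit (the $\eta^{-1}$ balance forcing $|\mathbf{\Omega}^0|=1$ and the projection $\mathcal{P}_{\mathbf{\Omega}^{0\perp}}$ annihilating the penalization). Your version is in fact more detailed than the paper's, and your closing caveat that the convergence is to be understood formally matches the paper's level of rigour.
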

\begin{proof}
The main idea is based on the fact that the right-hand side of (\ref{SOH_relax_b}) is parallel to $\mathbf{\Omega}^\eta$. 
Then the proof is analogous to the one in \cite{Motsch_Navoret_MMS11}.
\end{proof}

The numerical method, the so-called splitting scheme, is based on the work in \cite{Motsch_Navoret_MMS11} and details are provided in Appendix \ref{app_split}.

We test the particle model and the SOH model on a rectangular domain $[0,L_x]\times[0,L_y]$ with $L_x = L_y = 10$
and impose periodic boundary conditions in both directions.
We choose a time step $\Delta t = 0.001$ to discretize time. 
For the function $v(\rho)$, we consider a monotonically decaying power law with exponent $\alpha$ inspired by experimental and numerical results, and with an lower density threshold $\rho^*$: below $\rho^*$, the velocity is essentially constant, and then it rapidly decreases with power law $\alpha$,
\[
v(\rho) = \beta\left(\frac{\rho}{\rho^*} + 1\right)^{-\alpha} \text{ with three positive parameters } \rho^*, \alpha, \beta. 
\]
Note that 
\[
\tilde{v}'(\rho) = \frac{\rho^* + (1-\alpha)\rho}{\rho + \rho^*}v(\rho).
\]
The parameters $(\rho^*, \alpha, \beta)$ will be carefully chosen to place the system in the different regimes of stability derived in Section \ref{Sec_SOH}, in order to validate the corresponding stability results on the SOH model.

\subsection{Validation of the SOH models.}\label{Sec_val_SOH}

In this section, the SOH model parameters are given as
\begin{equation}\label{model_para}
c_1 = 0.9486, \quad c_2 = 0.8486, \quad \gamma = 0.11857,
\end{equation}
and $d = 0.1$, which are computed using the formulas derived in Appendix \ref{app2} with the parameters of the particle model: $N = 10^5, \nu = 100, D = 10, R_1 = R_2 = 0.1$, so that we can perform the comparison between the two types of models.
Indeed, this choice of parameters would correspond, after the scaling of Appendix \ref{app1}, to a value of $\varepsilon = 0.01$.
The three parameters for the velocity function are chosen as $(\rho^*, \alpha, \beta) = (0.12, 10, 1)$ which results in a stable model. 
This choice is only made for demonstration purposes.

The domain is uniformly partitioned using two integers $N_x$ and $N_y$ which represent the mesh size for the numerical integration in two dimensions.
We apply the splitting method to the SOH model starting with the same initial condition and iteratively calculate the numerical errors with gradually decreasing mesh spacing $(\Delta x = \frac{L_x}{N_x}, \Delta y = \frac{L_y}{N_y})$.
Fig. \ref{accuracy} is the profile of the errors as a function of mesh spacing $\Delta x$ in $\log$ scale and indicates that as expected the method is accurate to the first order.
\begin{figure}[!h]
\begin{center}
\includegraphics[width = 0.50\textwidth]{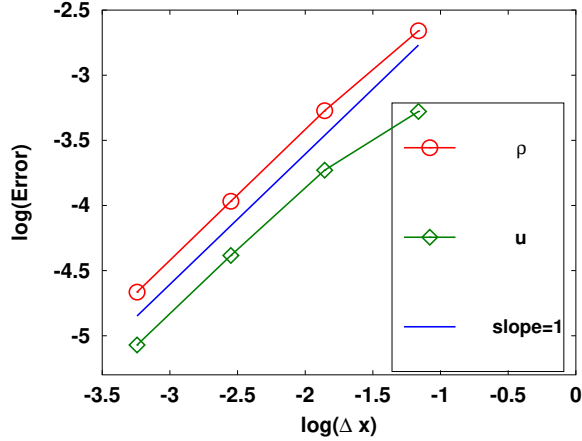}
\end{center}
\caption{The accuracy test at $t=1$ shows that the splitting scheme is of first order. The initial data is given by $\rho_0 = \rho_s(1+0.1\sin(\pi x)), \theta_0 = \theta_s(1+0.1\sin(\pi x))$ with $(\rho_s, \theta_s) = (0.01, \frac{\pi}{4})$ and the mesh sizes are iteratively $N_x = N_y = 32, 64, 128, 256$.}
\label{accuracy}
\end{figure}

We show that the SOH model agrees with the particle model using the following example. 
We construct an initial configuration with $\rho_0(x,y) \equiv 0.01$ and a velocity field given by the Taylor-Green type function
\[
\mathbf{\Omega}_0(x,y) = (\sin\left(\frac{\pi}{5}x\right)\cos\left(\frac{\pi}{5}y\right), 
-\cos\left(\frac{\pi}{5}x\right)\sin\left(\frac{\pi}{5}y\right))^T.
\]
Fig. \ref{part_SOH}(a) shows the contours of the density $\rho$ and the velocity field $\mathbf{\Omega}$ at $t = 0.5$, which are the average of 40 simulations using the particle model with $N = 10^5, \nu = 100, D = 10, R_1 = R_2 = 0.1$. This ensemble average is necessary due to the stochastic nature of both the initial particle positions and the angular dynamics.
Fig. \ref{part_SOH}(b) is the numerical solution produced by the SOH model with \eqref{model_para} and $d=0.1$. 
As explained at the the beginning of Section \ref{Sec_val_SOH}, the parameters for both the Vicsek and SOH models are consistent. The agreement is fairly good. 
\begin{figure}[!h]
\begin{center}
\subfigure[$t = 0.5$: the particle model.]
{\includegraphics[width = 0.49\textwidth]{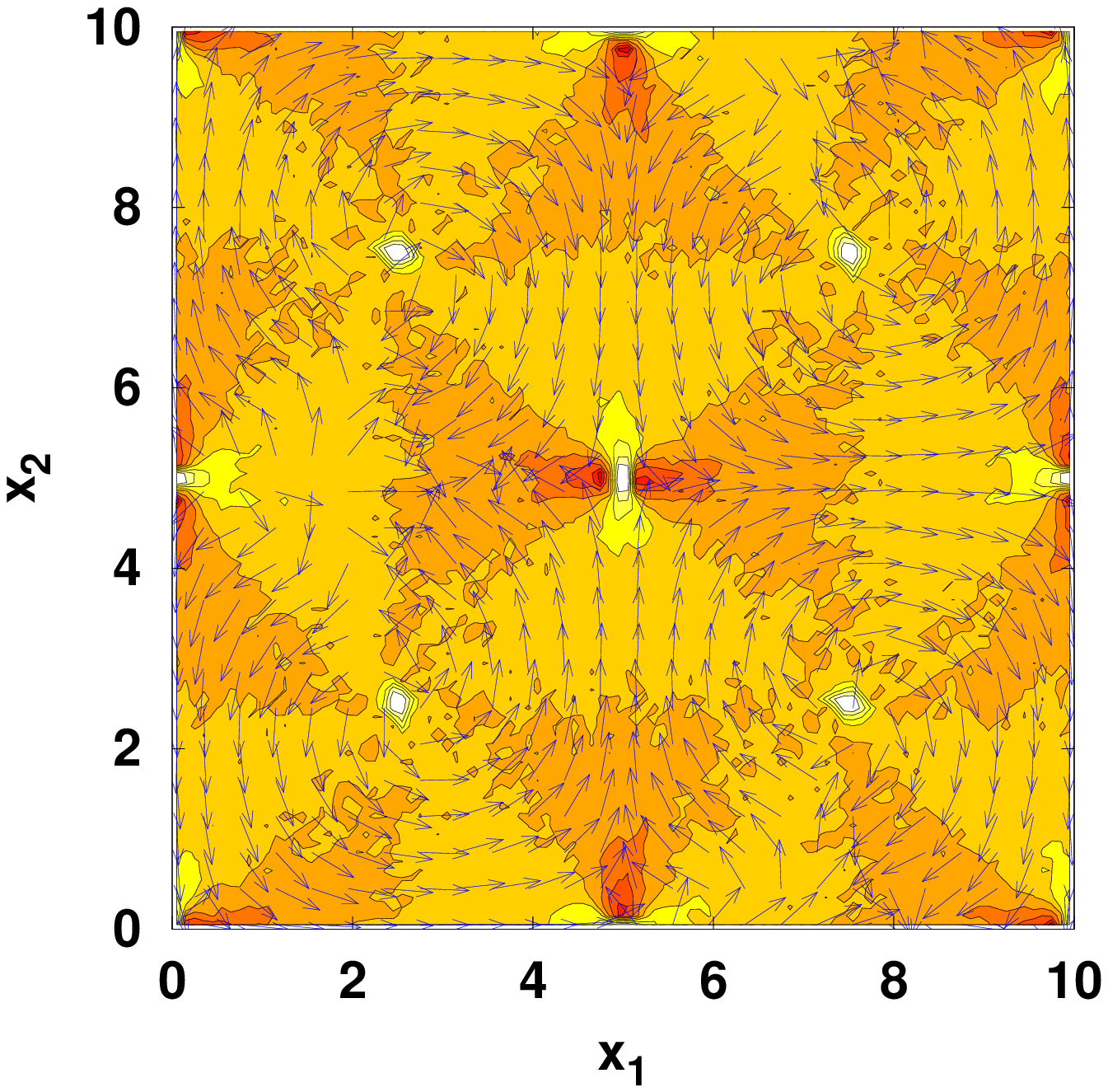}}
\subfigure[$t = 0.5$: the SOH model.]
{\includegraphics[width = 0.49\textwidth]{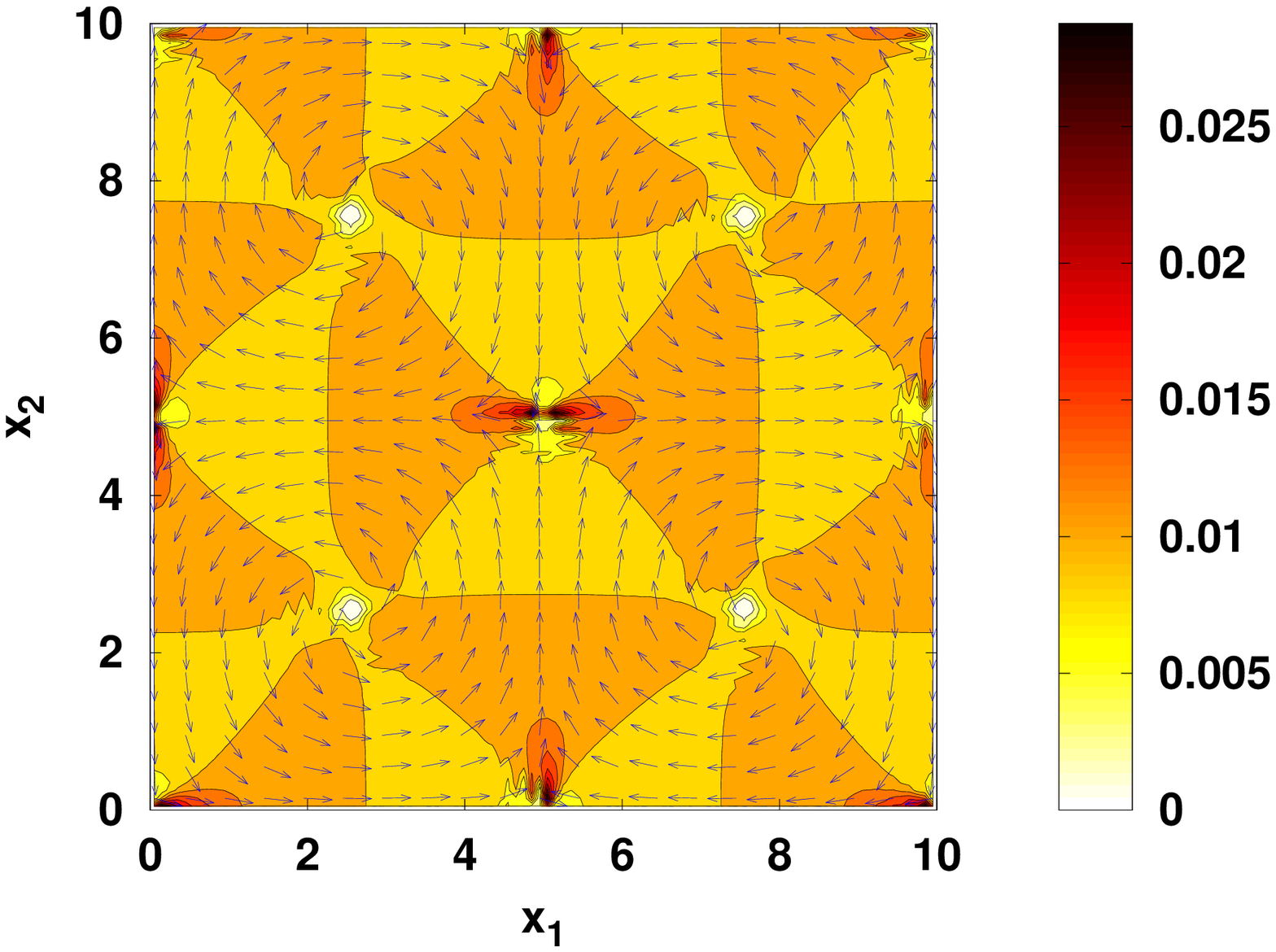}}
\end{center}
\caption{Comparison between the particle (left) and SOH models (right). For the particle model, $N = 10^5, \nu = 100, D = 10, R_1 = R_2 = 0.1$. 
The result is the average of 40 simulations. For the SOH model, $N_x = N_y = 100$.}
\label{part_SOH}
\end{figure}

\subsection{Numerical results for the stability of the models.}

We vary the three parameters in the function $v(\rho)$ to demonstrate the stability of the models around uniform steady states. The model parameters are given by \eqref{model_para}.
We choose an initial condition with a sinusoidal perturbation along $x$ in both density and angle, in phase with each other, $\rho_0 = \rho_s(1+\sigma\sin(\pi x)), \theta_0 = \theta_s(1+\sigma\sin(\pi x))$ with $(\rho_s, \theta_s) = (0.01, \frac{\pi}{4})$.
The model stability will be measured by the Root Mean Square Fluctuation (RMSF) of $(\rho, \theta)$, i.e. the $L^2$ norm of $\rho - \rho_s$ and $\theta-\theta_s$: 
\[
{\rm RMSF}(\rho) = \|\rho-\rho_s\|_{L^2} = \left(\int_0^{L_x}\int_0^{L_y}(\rho(x,t)-\rho_s)^2\,dxdy\right)^\frac12,
\]
and ${\rm RMSF}(\theta)$ is defined in a similar way.

\subsubsection{The inviscid models.}

For this numerical test, let $(\rho^*, \alpha, \beta) = (0.005, 2, 1)$ and $d = 10$.
Again, these choices are only made for demonstration purposes.
Note that $\tilde{v}'(\rho) < 0$ and the inequality on the right-hand side in \eqref{inviscid_Delta_neg} does not hold. 
Therefore the resulting model is unstable and RMSF of $(\rho, \theta)$ must grow in time, which is demonstrated in Fig. \ref{stability_test} with RMSF in $\log$-scale with respect to time $t$. An exponential growth of the perturbation, like we expect, would translate to a straight line in these graphs.
\begin{figure}[!h]
\begin{center}
\subfigure[$\log\big({\rm RMSF}(\rho)\big)$ for $0\leq t \leq 5$.]
{\includegraphics[width = 0.44\textwidth]{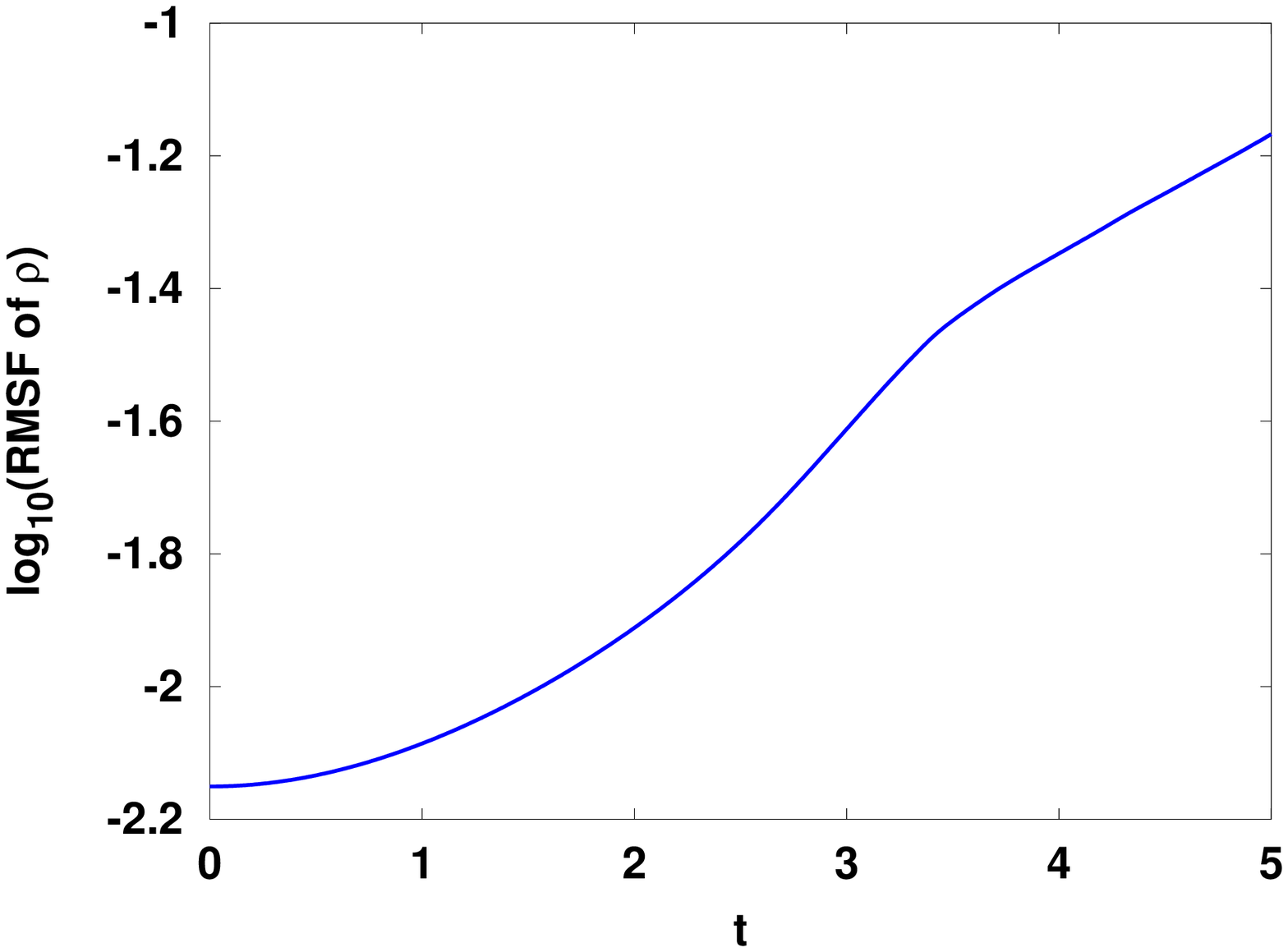}}\qquad
\subfigure[$\log\big({\rm RMSF}(\theta)\big)$ for $0\leq t \leq 5$.]
{\includegraphics[width = 0.44\textwidth]{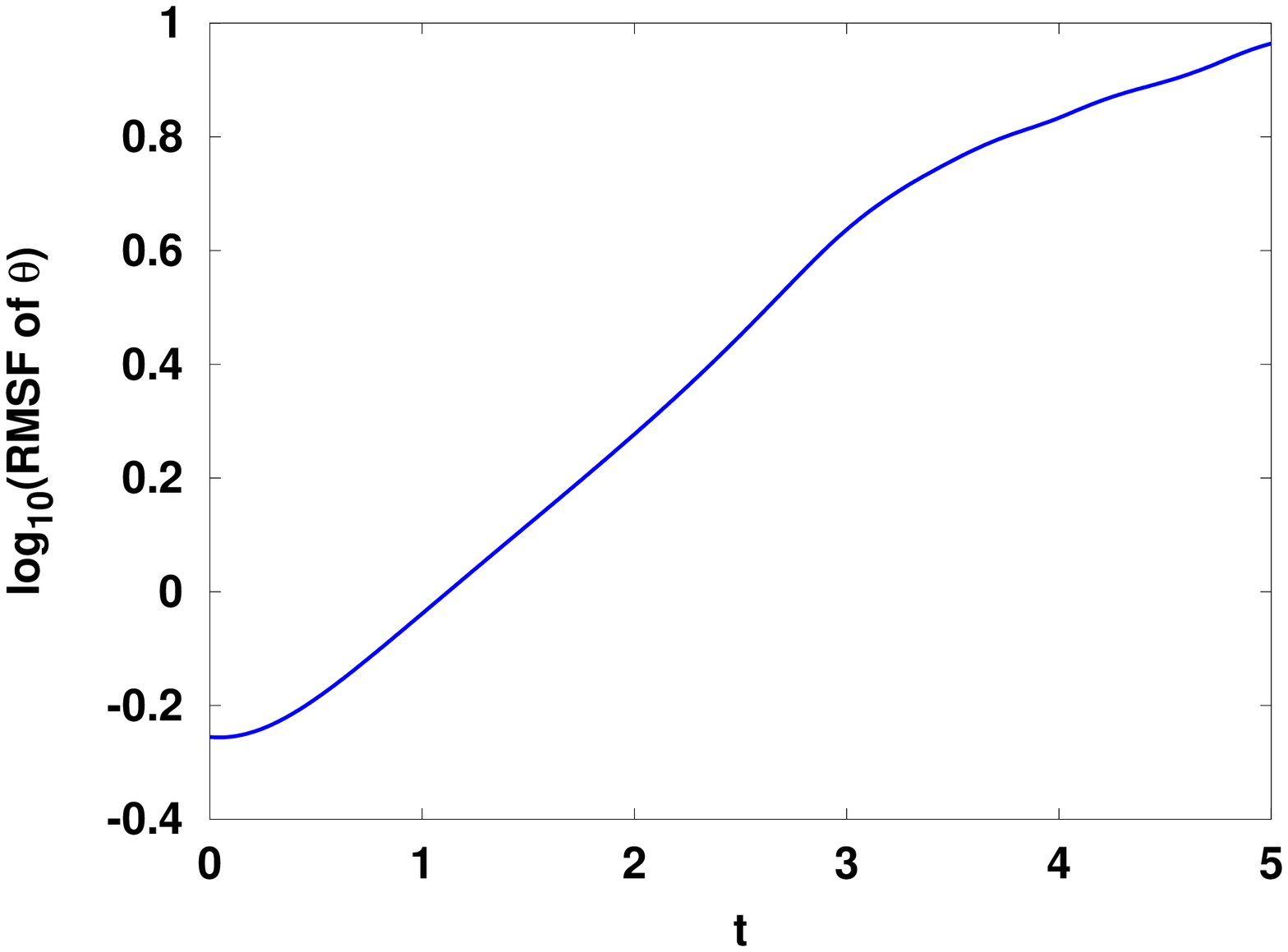}}
\end{center}
\caption{Stability test of the SOH model with $\gamma = 0$ and $\sigma = 0.1$. The profiles show the RMSF of $\rho$ and $\theta$ in $\log$-scale with respect to time $t$.
The numerical solutions evolves from the steady states, and locally high concentrations develop. 
}
\label{stability_test}
\end{figure}

\subsubsection{The viscous models.}
For the viscous models, we will provide a set of comparative examples for the stable and unstable results.
The parameters are taken as \eqref{model_para} and $d = 0.5$. 
For the velocity function, fix $\alpha = 2$ and $\beta = 5$. 
Then the sign of $\tilde{v}'(\rho)$ will depend on the values of $\rho^*$.
From the analysis in Section \ref{Sec_Viscous}, we predict that the model around the constant steady state is stable if $\rho^*$ is large enough.
Fig. \ref{viscous_stable} shows the RMSF of the numerical solutions $(\rho, \theta)$ computed with $\rho^* = 0.02$ and $\sigma = 0.01$.
They are decreasing all the time up to $t = 20$.
\begin{figure}[!h]
\begin{center}
\subfigure[$\log\big({\rm RMSF}(\rho)\big)$ for $0 \leq t \leq 20$.]
{\includegraphics[width = 0.44\textwidth]{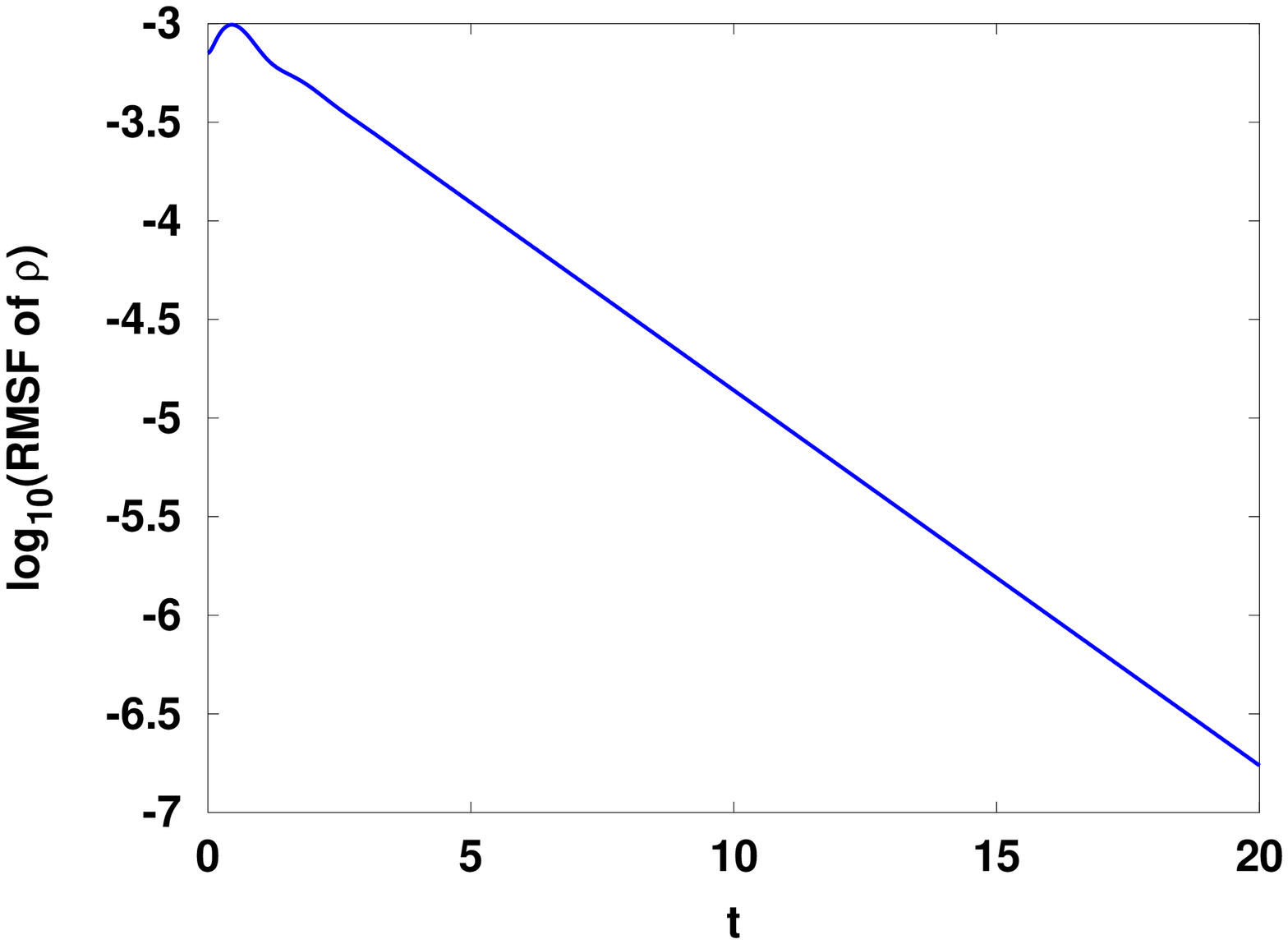}}\qquad
\subfigure[$\log\big({\rm RMSF}(\theta)\big)$ for $0\leq t \leq 20$.]
{\includegraphics[width = 0.44\textwidth]{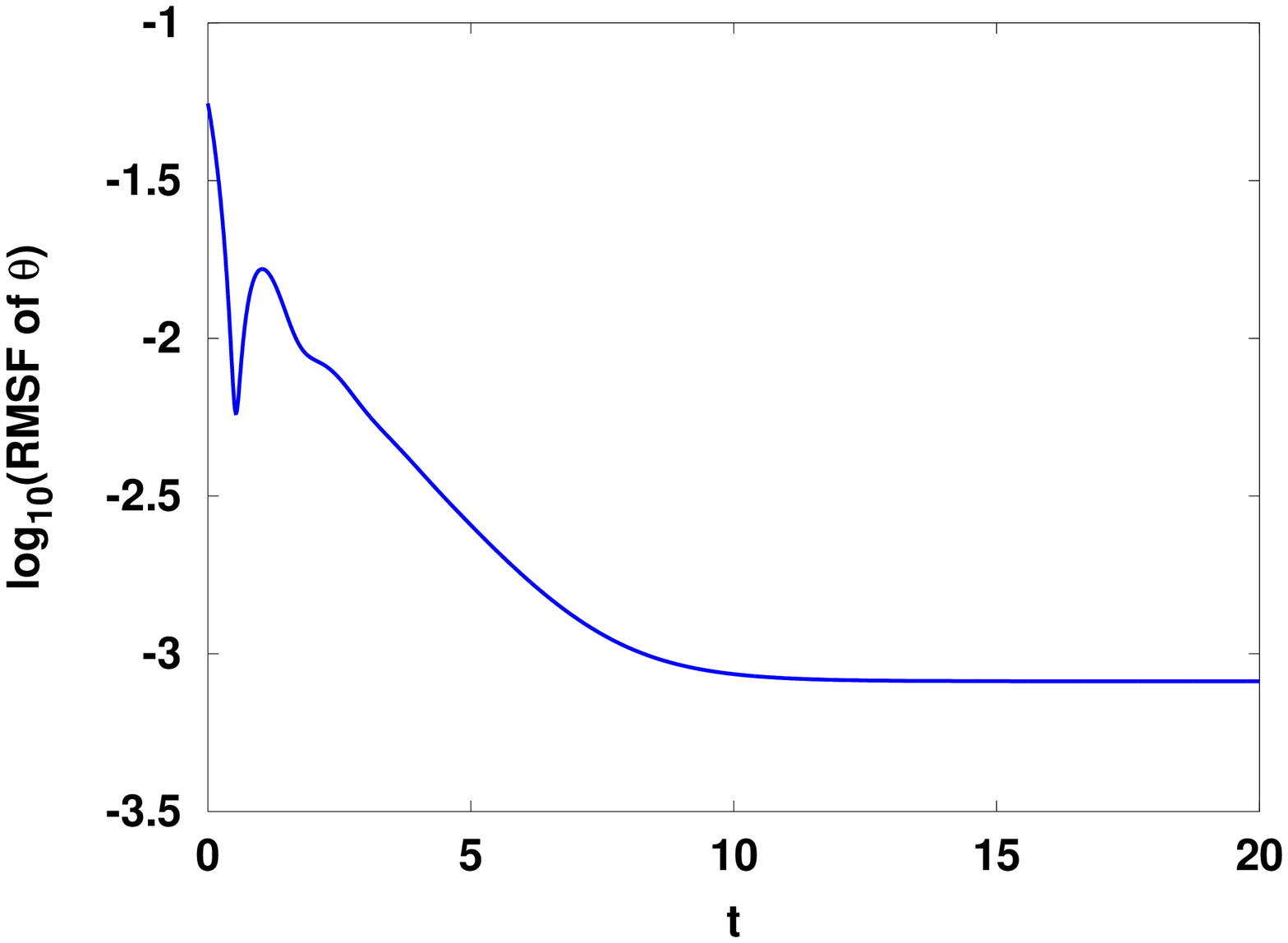}}
\end{center}
\caption{Stability test of the SOH model with $d = 0.5$, $\sigma = 0.01$ and $\rho^* = 0.02$.
RMSF of the density and angle are decreasing with time. }
\label{viscous_stable}
\end{figure}

Let $\rho^* = 0.005$ where we have $\tilde v'(\rho) < 0$ at the initial time. 
Hence the numerical solutions are expected to evolve away from $(\rho_s, \theta_s)$, indicating the instablility of the model.
Fig. \ref{viscous_instable} shows the RMSF of the numerical solutions $(\rho, \theta)$ for the time interval $t\in [0,5]$. 
One can observe that they grow significantly; compare with the numerical solutions in Fig. \ref{viscous_stable} whose RMSF decreases to zero.
\begin{figure}[!h]
\begin{center}
\subfigure[$\log\big({\rm RMSF}(\rho)\big)$ for $0\leq t \leq 15$.]
{\includegraphics[width = 0.44\textwidth]{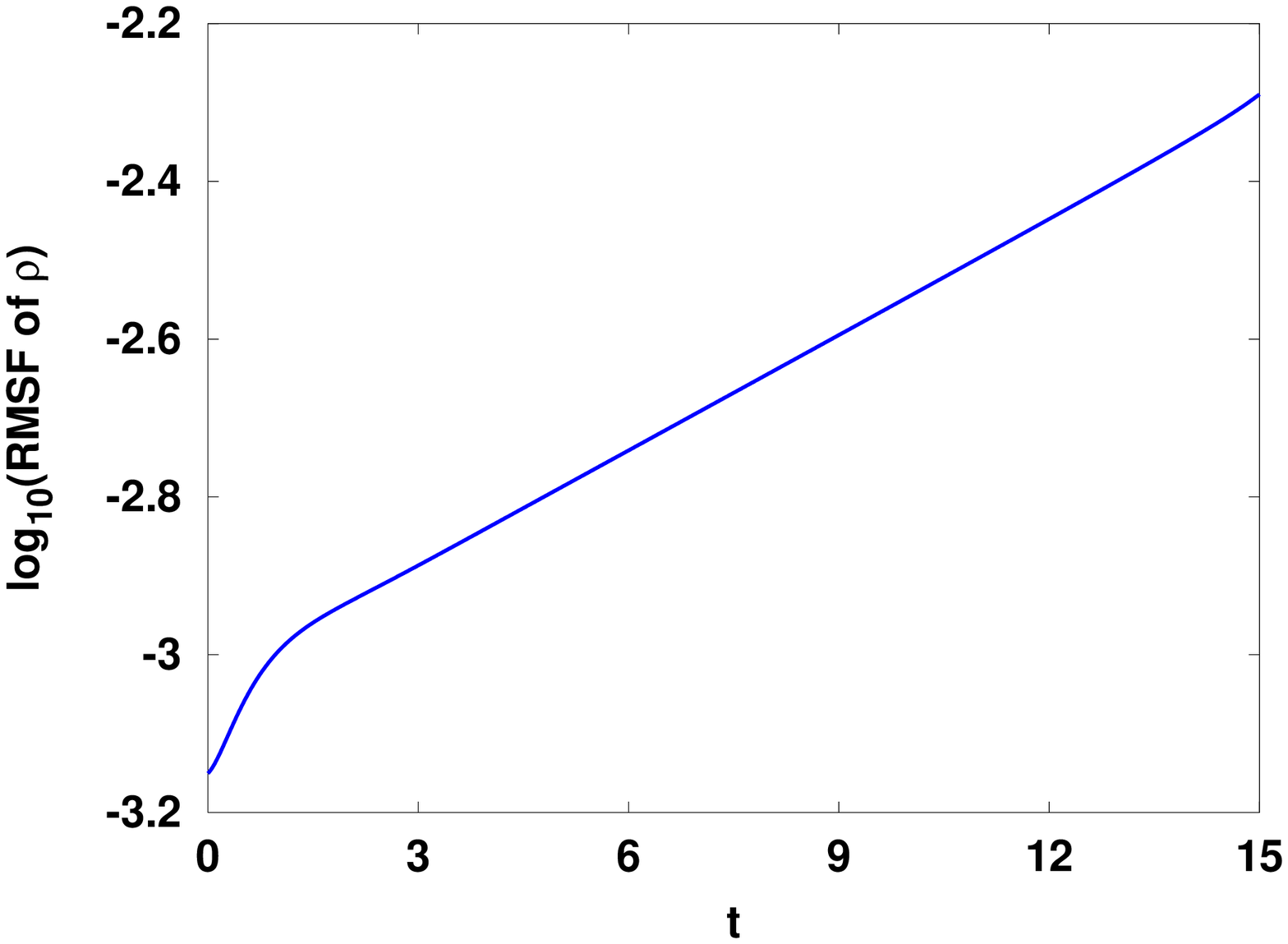}}\qquad
\subfigure[$\log\big({\rm RMSF}(\theta)\big)$ for $0\leq t \leq 15$.]
{\includegraphics[width = 0.44\textwidth]{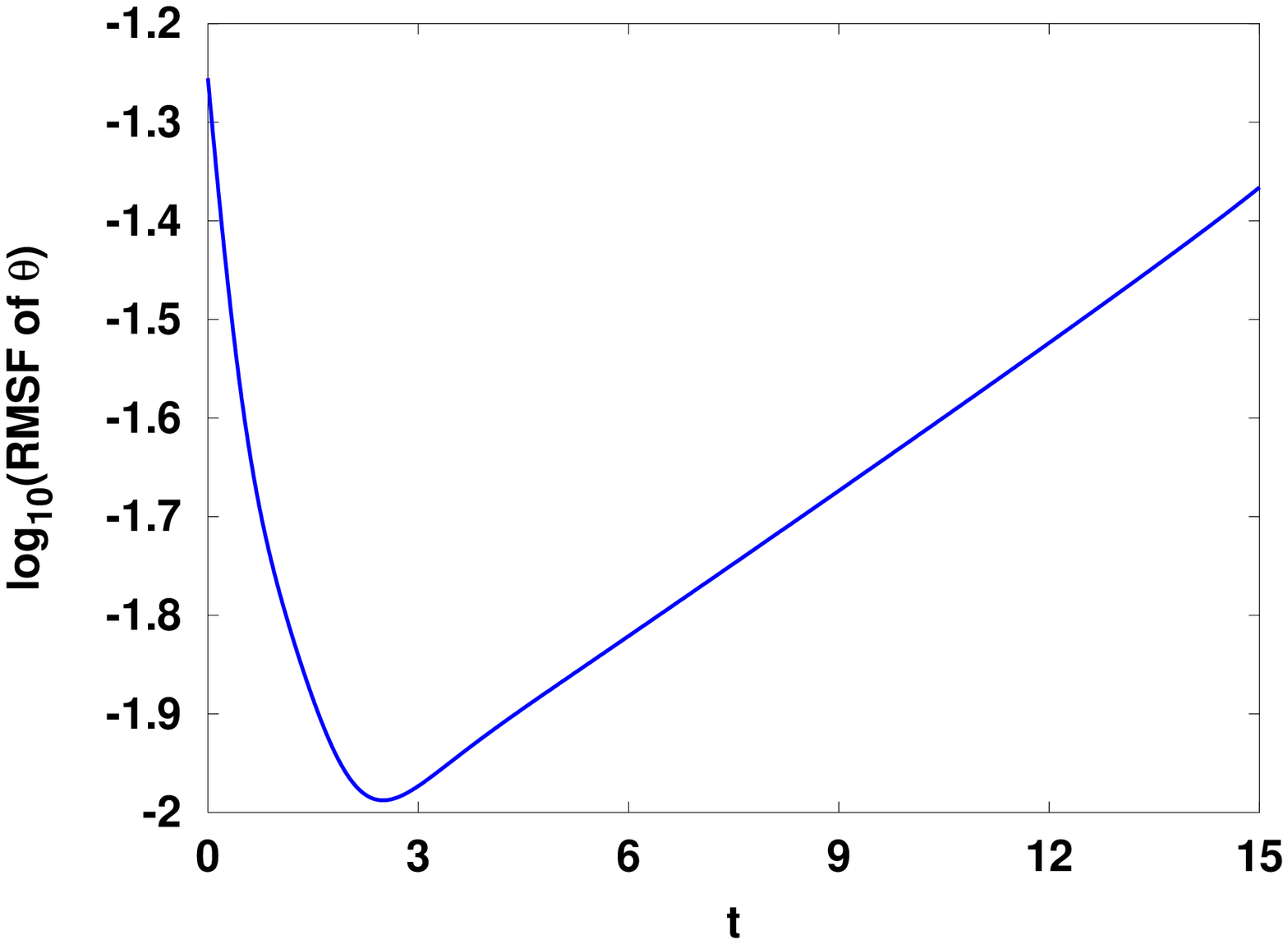}}
\end{center}
\caption{Stability test of the SOH model with $d = 0.5$, $\sigma = 0.01$ and $\rho^* = 0.005$.
The RMSF of the numerical solutions grows gradually and high local concentrations develop. 
The linear scaling of the $\log$ of RMSF implies an exponential growth of the perturbation as a function of time $t$.} 
\label{viscous_instable}
\end{figure}

\subsection{Growth rate of the instability.}
 
The modal analysis of the hydrodynamic system \eqref{sys_FT} shows that the magnitude of $\hat\rho_\sigma$ possesses an exponential growth rate $\xi{\rm Im}\lambda$.
In this section, we will examine the numerical solutions provided by the particle and SOH models and compare the numerical growth rate of the perturbation to the modal analysis result of the viscous system \eqref{sys_FT}. 
For the sake of convenience, we perform a Discrete Fourier Transform (details can be found in Appendix \ref{viscous_DFT}).
The initial configuration is taken as:
\begin{align*}
\rho_0(x,y) = \rho_s(1+\sigma\rho_\sigma(x,y)), \qquad
\theta_0(x,y) = \theta_s(1+\sigma\theta_\sigma(x,y)),
\end{align*}
where $\sigma = 0.01$, and $\rho_s = 0.01$ is fixed. $\rho_\sigma$ takes the form
\begin{align*} 
\rho_{\sigma}(x) = \sum_{\xi=0}^{10}\left[a_1(\xi) \cos\left(\frac{2\pi\xi}{L_x}\big(x - \frac{\Delta x}{2}\big)\right) 
+ a_2(\xi)\sin\left(\frac{2\pi\xi}{L_x}\big(x - \frac{\Delta x}{2}\big)\right)\right],
\end{align*}
and $\theta_\sigma$ is given in a similar way.
Here $a_1(\xi), a_2(\xi)$ in $\rho_\sigma$ and $\theta_\sigma$ are different random numbers generated from a uniform distribution on the interval $[0,1]$.

The parameters of the SOH model are fixed as $c_1 = 0.975, c_2 = 0.925, d = 0.05$ and $\gamma = 0.12188$. 
They correspond to the particle parameters where $N = 10^5, \nu = 100, D = 5$, and $R_1 = R_2 = 0.1$.
The three parameters for $v(\rho)$ are chosen as $(\rho^*, \alpha, \beta) = (0.005, 2, 5)$ in which case the viscous model is unstable.
The steady state for the density is chosen as $\rho_s = 0.01$. 
We vary the steady state orientation $\theta_s$ in the interval $[0, \frac{\pi}{2}]$ and plot the growth rate $\xi{\rm Im}\lambda$ as a function $\xi \in [0, 6]$ for both the linearized and SOH models in Fig. \ref{viscous_contour}.
Fig. \ref{viscous_contour}(a) is computed using the fomula \eqref{eq_lambda}. 
To obtain Fig. \ref{viscous_contour}(b), we proceed for each $\theta_s$ in the following way. We compute the numerical solutions of the SOH model up to time $t=1$, and perform the Discrete Fourier transform on the perturbed part $\rho_\sigma = \rho - \rho_s$ to get $\hat\rho_\sigma(\xi,t)$. 
With different initial data, i.e. different $a_1$ and $a_2$, we collect $N_{\text{sam}}$ samples of $\hat\rho_\sigma(\xi,t)$ and apply a simple linear regression to the averaged quantities
\[
\frac{1}{N_{\text{sam}}}\sum_{N_{sam}}\frac{\hat\rho_\sigma(\xi,t)}{\hat\rho_\sigma(\xi,0)}
\]
with respect to time $t$. This will generate the growth rate for each $\xi$; Fig. \ref{viscous_contour}(b) shows the results for $\xi = 0, 1, \ldots, 6$ with $N_{\text{sam}} = 100$. 
The motivation here is as follows: by choosing random coefficients for the different modes we generate a set of initial conditions that have statistically equal weight for each mode. 
The growth rate is interpreted as the slope of the function $t \to \hat\rho_\sigma(\xi, t)$ in log scales. 
The similarity of these two sets of contours is the increase of the growth rate with respect to both $\xi$ and $\theta_s$. The difference is that the fully nonlinear SOH model rapidly develops much larger growth rates compared to the linearised solution. There are also fluctuations at certain $\theta_s$ and $\xi$ likely due to finite size effects. 
\begin{figure}[!h]
\begin{center}
\subfigure[The linear prediction.]
{\includegraphics[width = 0.48\textwidth]{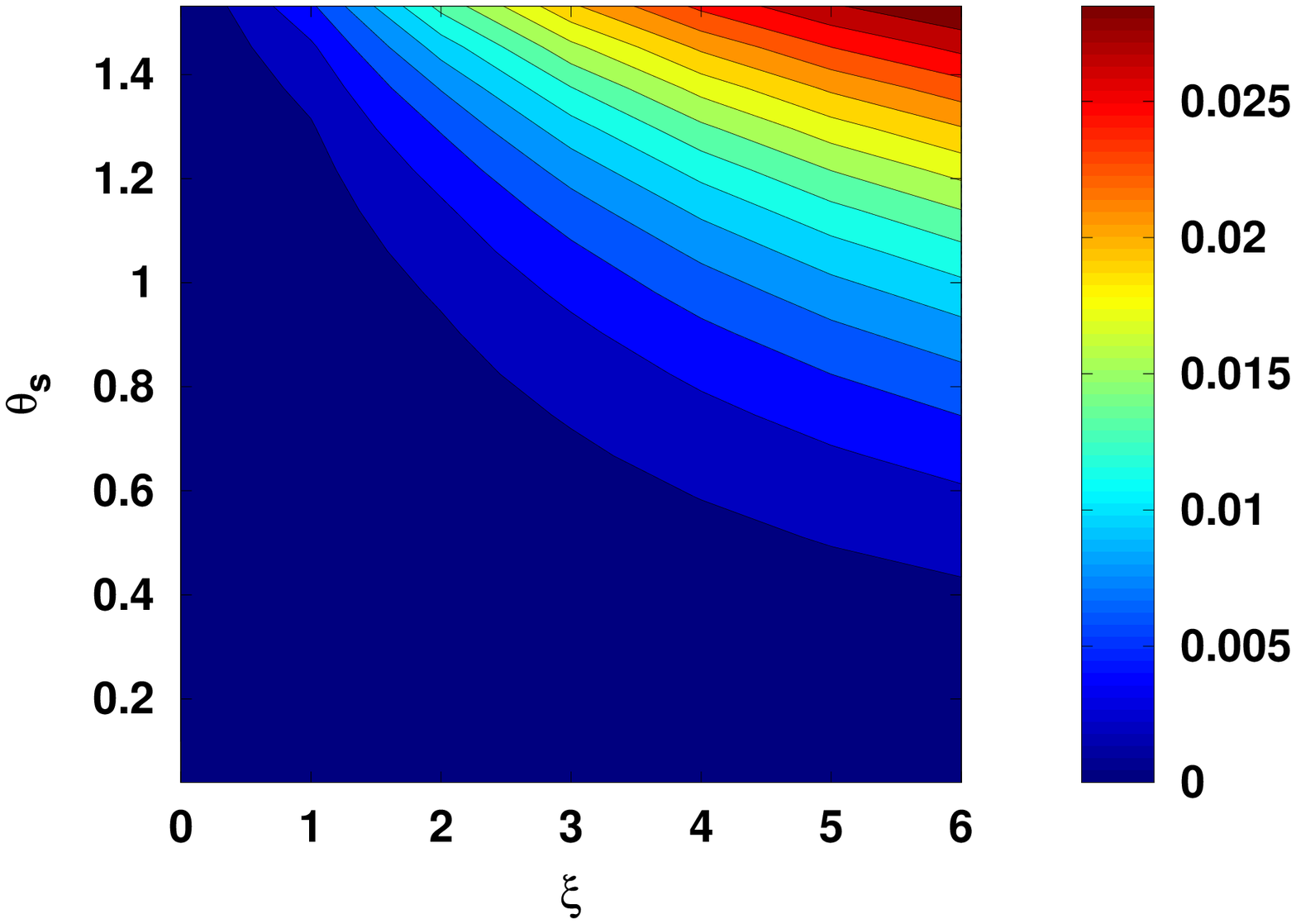}}
\subfigure[The SOH model.]
{\includegraphics[width = 0.48\textwidth]{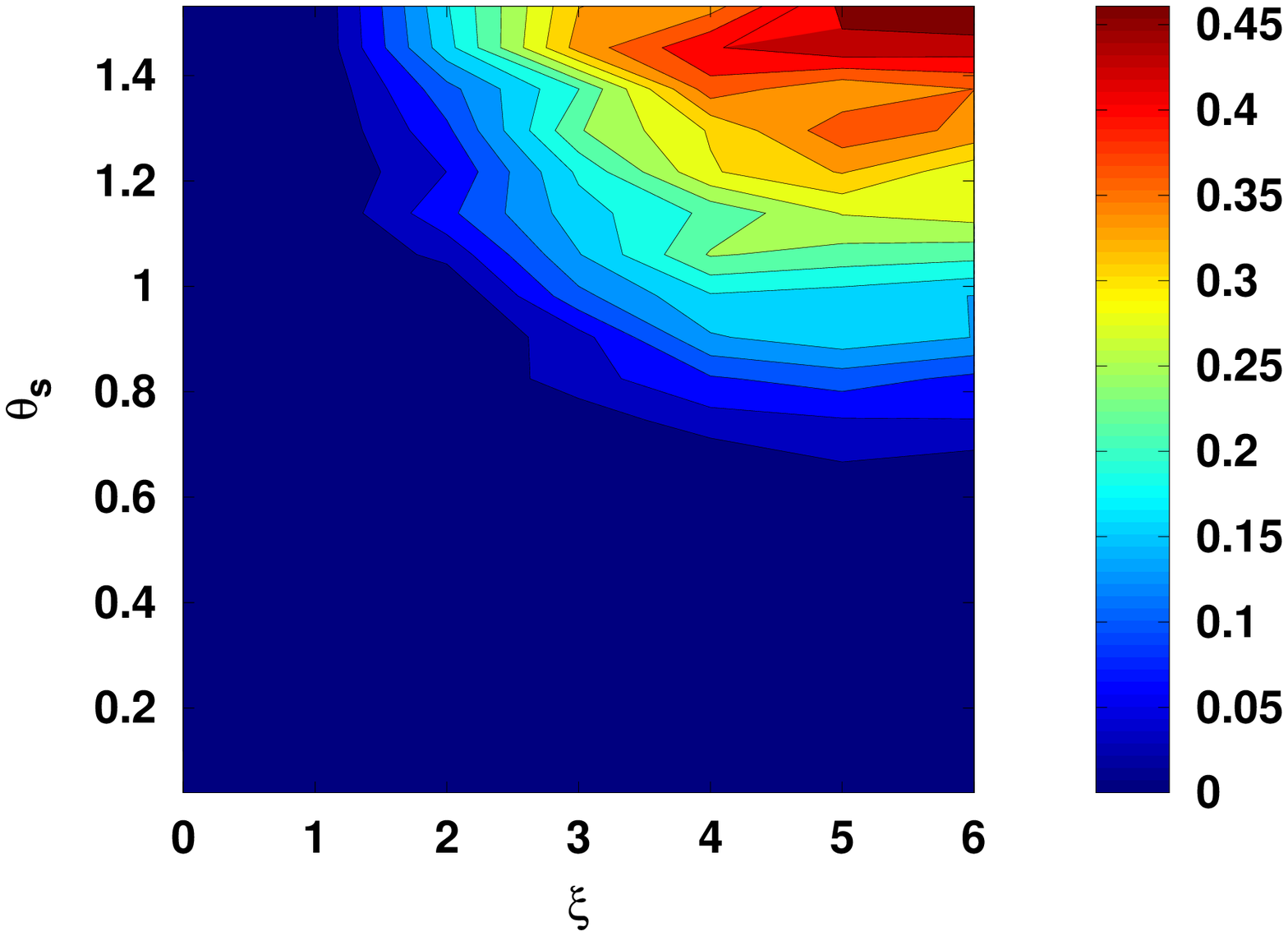}}
\end{center}
\caption{Growth rate of the perturbation $\rho_\sigma$. 
The parameters are $c_1 = 0.975, c_2 = 0.925, d = 0.05$ and $k_0 = 0.125$.
The three parameters for $v(\rho)$ are chosen as $(\rho^*, \alpha, \beta) = (0.005, 2, 5)$.
The steady state for the density is fixed at $\rho_s = 0.01$ and the final time is $t = 1$.
(a) is computed using the fomula \eqref{eq_lambda}. 
In order to obtain (b), we compute the numerical solutions of the SOH model and perform a simple linear regression on the Discrete Fourier transform of the perturbed part, i.e. $\rho_\sigma = \rho - \rho_s$. 
The growth rate is interpreted as the slope of the function $t \to \hat\rho_\sigma(\xi, t)$. }
\label{viscous_contour}
\end{figure}

\begin{figure}[!h]
\begin{minipage}[c]{0.45\textwidth}
\includegraphics[width = 1.00\textwidth]{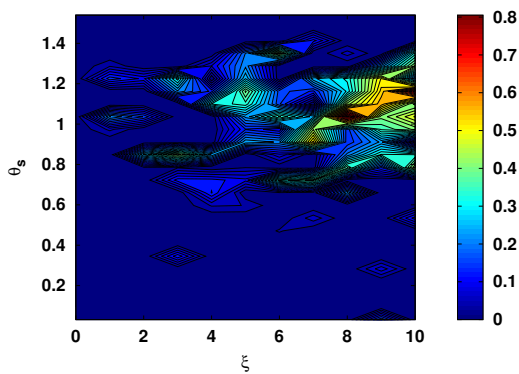}
\end{minipage}
\begin{minipage}[c]{0.55\textwidth}
\caption{Growth rate of the perturbation $\rho_\sigma$ given by the particle model with
$N = 10^5$. The parameters are $\nu = 100, D = 5, R_1 = R_2 = 0.1$.
The three parameters for $v(\rho)$ are chosen as $(\rho^*, \alpha, \beta) = (0.005, 2, 5)$.
For each $\theta_s$, the growth rate is computed using the average of $10$ simulations, in order to reduce the effects of noise.
}
\label{ptc_contour}
\end{minipage}
\end{figure}

Fig. \ref{ptc_contour} shows the growth rate of the perturbation $\rho_\sigma$ given by the particle model.
The number of particles is $N = 10^5$. The parameters are given as $\nu = 100, D = 5, R_1 = R_2 = 0.1$. 
And they match the parameters for the SOH model in Fig. \ref{viscous_contour}.
The three parameters for $v(\rho)$ are chosen as $(\rho^*, \alpha, \beta) = (0.005, 2, 5)$ to match the SOH model. 
For each $\theta_s$, the growth rate is computed using the average of $10$ simulations, in order to reduce the effects of noise.
Although the contours do not exhibit the monotonic behaviour of the growth rate of the SOH model with respect to the steady state angle $\theta_s$ and the eigenmode $\xi$ due to the nonlinearity and stochastic effects, one can observe the stronger instability for larger $\theta_s$ and $\xi$.
 
\section{Conclusion.}
We have studied a Vicsek model where the velocity depends on the local density and then derived the corresponding SOH model.
At the hydrodynamic scale, we analyse the stability of the two-dimensional inviscid and viscous models around their steady states.
In summary, the stability of the SOH models is determined by the behaviour of the mass flux $\rho v(\rho)$. 
The theoretical results are illustrated by the numerical simulations with different choices of the velocity function. In general, we find good agreement between the theoretical prediction of the onset of instability and the numerical results. In the unstable regime, while our numerical results are qualitatively compatible with the predictions, the strong non-linearities present in the SOH model and especially the particle model quickly dominate the response.
The SOH model we have developed here is a useful model to describe the semen flow in the experiments designed in \cite{Creppy_etal_2015}. These experiments record the correlation between the averaged velocity and the density of the sperm cells,
and the velocity as a function of the density was fitted using the SOH model.
Further study of vortices observed in the collective behaviour of the semen flow is under way.
\begin{appendices}
\setcounter{equation}{0}
\section{Derivation of the SOH model with nonconstant velocity.}\label{SOH_derivation}
The SOH model can be explicitly coarse-grained from the particle model introduced in Eq. \eqref{IBM}.
For completeness, we provide the main steps here and readers are referred to \cite{Degond_Motsch_M3AS08} for details.
\subsection{The mean field model.}\label{app1}
We consider the limit of the system when $N \to \infty$. Introduce the empirical distribution $f^N(x,\omega,t)$ defined as
\begin{equation}
f^N(x,\omega,t) = \frac{1}{N}\sum_{i=1}^N\delta(x-X_i(t))\delta(\omega,\omega_i(t)),
\end{equation}
where $\omega \in \mathbb{S}^{n-1}$, the unit sphere in $\mathbb{R}^n$ and the distribution $\delta(\omega,\omega')$ is defined as
\[
\langle \delta(\omega,\omega'),\varphi(\omega)\rangle = \varphi(\omega') \quad \text{ for any smooth function } \varphi.
\]

Sending $N \to \infty$ and scaling out to obtain dimensionless parameters, 
the formal mean-field system for the probability distribution function $f(x,\omega,t)$ on $\mathbb{R}^n\times\mathbb{S}^{n-1}\times(0,\infty)$ 
 is given by
\begin{equation}
\partial_t f + \nabla_x\cdot(v(m_f)\omega f) + \nabla_{\omega}\cdot(Gf) = \breve D\Delta_{\omega} f,
\end{equation}
where $\Delta_{\omega}$ denotes the Laplace-Beltrami operator on the sphere and
\begin{align*}
G(x,\omega,t) &= \breve \nu\mathcal{P}_{\omega^\perp}\bar{\omega}(x,\omega,t) 
= \breve \nu\mathcal{P}_{\omega^\perp}\frac{\mathcal{J}_f(x,t)}{|\mathcal{J}_f(x,t)|},\\
\mathcal{J}_f(x,t) &= \int_{\mathbb{R}^n\times\mathbb{S}^{n-1}}K_1\left(\frac{|x-y|}{\breve R_1}\right) \omega f(y,\omega,t)\,dyd\omega,\\
m_f(x,t) &= \frac{1}{|B_{\breve R_2}|}\int_{\mathbb{R}^n \times\mathbb{S}^{n-1}} K_2\left(\frac{|x-y|}{\breve R_2}\right) f(y,\omega,t)\,dyd\omega.
\end{align*}
Here we assume that $K_1$ and $K_2$ only depend on the distance between particles, characterized by the dimensionless parameters $\breve R_1$ and $\breve R_2$. 
We also assume that both of the ranges of the interaction kernels $K_1$ and $K_2$ are small. 

Let $\varepsilon$ be a small positive number. We will perform the explicit hydrodynamic limit by introducing rescaled parameters that individually tend to zero in the limit $\epsilon\rightarrow 0$.
Then $\breve R_1 = \sqrt{\varepsilon}\hat R_1, \breve R_2 = \sqrt{\varepsilon}\hat R_2$. 
We also assume that the alignment strength (a.k.a the social forces), and the diffusion coefficient are large, but of similar magnitude i.e.,
\[
\breve\nu = \frac{1}{\varepsilon}, \quad \frac{\breve D}{\breve\nu} = d = \mathcal{O}(1).
\]
For simplicity, we drop the hats and have the following result:
\begin{lemma}
The density $f^\varepsilon(x,\omega,t)$ satisfies the following equation:
\begin{equation}
\varepsilon[\partial_t f^{\varepsilon} + \nabla_x\cdot(v(\rho_{f^\varepsilon})\omega f^\varepsilon) ]
+\nabla_\omega\cdot[\mathcal{P}_{\omega^\perp}(\mathbf\Omega_{f^\varepsilon} + \varepsilon\mathbf\Omega_{f^\varepsilon}^1)f^\varepsilon] 
= d\Delta_\omega f^\varepsilon + \mathcal{O}(\varepsilon^2),
\end{equation}
where
\begin{align*}
\mathbf\Omega_{f^{\varepsilon}} &=\frac{J_{f^\varepsilon}}{|J_{f^\varepsilon}|} \text{ with } 
J_{f^{\varepsilon}} = \int_{\mathbb{S}^{n-1}}vf^{\varepsilon}(x,\omega,t)\,d\omega,\\
\mathbf\Omega_{f^\varepsilon}^1 &=\frac{k_1^0}{|J_{f^\varepsilon}|}\mathcal{P}_{\mathbf\Omega_{f^\varepsilon}^\perp}\Delta_xJ_{f^\varepsilon}
\text{ with } k_1^0 = \frac{R_1^2}{2n}\frac{\int_{\mathbb{R}^n}K_1(|z|)|z|^2\,dz}{\int_{\mathbb{R}^n}K_1(|z|)\,dz},\\
\rho_{f^\varepsilon}(x,t) &= \int_{\mathbb{S}^{n-1}}f^\varepsilon(x,\omega,t)\,d\omega.
\end{align*}
\end{lemma}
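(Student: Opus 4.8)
The plan is to start from the rescaled mean-field equation, insert the scaling assumptions $\breve\nu = 1/\varepsilon$, $\breve D = d/\varepsilon$ and $\breve R_i = \sqrt{\varepsilon}\hat R_i$, and then multiply through by $\varepsilon$. The transport term then carries an explicit factor $\varepsilon$, the diffusion term becomes $d\Delta_\omega f^\varepsilon$, and the alignment term $\nabla_\omega\cdot(\mathcal{P}_{\omega^\perp}(\mathcal{J}_{f^\varepsilon}/|\mathcal{J}_{f^\varepsilon}|)f^\varepsilon)$ loses its $1/\varepsilon$ prefactor. At that point the whole task reduces to localizing the two nonlocal objects $m_{f^\varepsilon}$ and $\mathcal{J}_{f^\varepsilon}/|\mathcal{J}_{f^\varepsilon}|$, each to the accuracy dictated by where it sits in the equation.

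First I would localize the kernels. In both $m_{f^\varepsilon}$ and $\mathcal{J}_{f^\varepsilon}$ I substitute $y = x + \breve R_i z$, so that $K_i(|x-y|/\breve R_i) = K_i(|z|)$ and the convolution becomes a weighted average of $f^\varepsilon(x+\breve R_i z,\cdot)$ against $K_i(|z|)$. Taylor expanding in $\breve R_i$ and using that $K_i$ is radial (so that $\int K_i(|z|)\,z\,dz = 0$, while the second moment $\int K_i(|z|)\,z\otimes z\,dz$ is a multiple of the identity), the first-order terms vanish and the second-order terms collapse into Laplacians. Since $\breve R_i^2 = \varepsilon\hat R_i^2$, this gives $m_{f^\varepsilon} = \rho_{f^\varepsilon} + \mathcal{O}(\varepsilon)$ under the natural normalization $|B_1|^{-1}\int_{\mathbb{R}^n} K_2(|z|)\,dz = 1$, hence $v(m_{f^\varepsilon}) = v(\rho_{f^\varepsilon}) + \mathcal{O}(\varepsilon)$; because this factor appears already multiplied by $\varepsilon$, replacing it by $v(\rho_{f^\varepsilon})$ costs only $\mathcal{O}(\varepsilon^2)$. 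In parallel, $\mathcal{J}_{f^\varepsilon} = C\,[\,J_{f^\varepsilon} + \varepsilon k_1^0 \Delta_x J_{f^\varepsilon} + \mathcal{O}(\varepsilon^2)\,]$ with a positive constant $C = \breve R_1^n\int K_1$ and $k_1^0$ exactly as in the statement, where $J_{f^\varepsilon} = \int_{\mathbb{S}^{n-1}}\omega f^\varepsilon\,d\omega$ is the leading local moment.

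The delicate step, and the one I expect to be the crux, is passing from $\mathcal{J}_{f^\varepsilon}$ to the unit vector $\mathcal{J}_{f^\varepsilon}/|\mathcal{J}_{f^\varepsilon}|$. Writing $J = J_{f^\varepsilon}$ and $W = k_1^0\Delta_x J_{f^\varepsilon}$, the constant $C$ cancels and I expand using $|J+\varepsilon W| = |J|\,(1 + \varepsilon\,J\cdot W/|J|^2 + \mathcal{O}(\varepsilon^2))$, obtaining
\[
\frac{\mathcal{J}_{f^\varepsilon}}{|\mathcal{J}_{f^\varepsilon}|} = \frac{J}{|J|} + \frac{\varepsilon}{|J|}\Big(W - \frac{J\,(J\cdot W)}{|J|^2}\Big) + \mathcal{O}(\varepsilon^2) = \mathbf\Omega_{f^\varepsilon} + \varepsilon\,\frac{k_1^0}{|J_{f^\varepsilon}|}\,\mathcal{P}_{\mathbf\Omega_{f^\varepsilon}^\perp}\Delta_x J_{f^\varepsilon} + \mathcal{O}(\varepsilon^2),
\]
where the parenthesis collapses to $\mathcal{P}_{\mathbf\Omega_{f^\varepsilon}^\perp}W$ precisely because differentiating the unit-norm constraint removes the component of the correction parallel to $\mathbf\Omega_{f^\varepsilon}$. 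This reproduces exactly $\mathbf\Omega_{f^\varepsilon} + \varepsilon\mathbf\Omega_{f^\varepsilon}^1$.

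Finally I would reassemble. Substituting the two expansions into the $\varepsilon$-multiplied equation and collecting, the alignment term becomes $\nabla_\omega\cdot[\mathcal{P}_{\omega^\perp}(\mathbf\Omega_{f^\varepsilon}+\varepsilon\mathbf\Omega_{f^\varepsilon}^1)f^\varepsilon]$ up to $\mathcal{O}(\varepsilon^2)$, and the transport term becomes $\varepsilon\nabla_x\cdot(v(\rho_{f^\varepsilon})\omega f^\varepsilon)$ up to $\mathcal{O}(\varepsilon^2)$, which yields the claimed identity. Beyond the normalization expansion, the only real care needed is order-bookkeeping: because the interaction ranges scale like $\sqrt{\varepsilon}$, the second-order-in-range kernel term is genuinely $\mathcal{O}(\varepsilon)$ and must be retained rather than discarded, and the hypothesis $J_{f^\varepsilon}\neq 0$ (nonzero local flux) is implicit throughout so that $\mathcal{J}_{f^\varepsilon}/|\mathcal{J}_{f^\varepsilon}|$ and its expansion are well defined.
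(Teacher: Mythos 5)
Your proposal is correct and follows exactly the standard argument: the paper itself gives no proof of this lemma (it defers to Degond--Motsch), and the intended derivation is precisely your Taylor expansion of the radial kernels (odd moments vanishing, the second moment producing the $\frac{R_1^2}{2n}$ factor in $k_1^0$), the localization $v(m_{f^\varepsilon}) = v(\rho_{f^\varepsilon}) + \mathcal{O}(\varepsilon)$ absorbed into the $\mathcal{O}(\varepsilon^2)$ remainder, and the expansion of $\mathcal{J}_{f^\varepsilon}/|\mathcal{J}_{f^\varepsilon}|$ yielding the projected correction $\varepsilon\,\frac{k_1^0}{|J_{f^\varepsilon}|}\mathcal{P}_{\mathbf\Omega_{f^\varepsilon}^\perp}\Delta_x J_{f^\varepsilon}$. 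Note only that the statement's formula $J_{f^\varepsilon}=\int_{\mathbb{S}^{n-1}} v f^\varepsilon\,d\omega$ is a typo for $\int_{\mathbb{S}^{n-1}}\omega f^\varepsilon\,d\omega$ (as written later in the paper), which is the definition you correctly use.
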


We drop the higher order term of $\varepsilon$, $\mathcal{O}(\varepsilon^2)$ and define the collisional operator $\mathcal{Q}(f^\varepsilon)$ by
\begin{equation}
\mathcal{Q}(f^\varepsilon) = -\nabla_\omega\cdot\mathcal{P}_{\omega^\perp}\mathbf\Omega_{f^\varepsilon}f^\varepsilon + d\Delta_\omega f^\varepsilon.
\end{equation}
The rescaled system can be written as
\begin{equation}\label{scaled_1}
\varepsilon[\partial_t f^{\varepsilon} + \nabla_x\cdot(v(\rho_{f^\varepsilon})\omega f^\varepsilon) 
+\nabla_\omega\cdot\mathcal{P}_{\omega^\perp}\mathbf\Omega_{f^\varepsilon}^1f^\varepsilon] 
= \mathcal{Q}(f^\varepsilon),
\end{equation}
where
\begin{align}
\mathbf\Omega_{f^{\varepsilon}} &=\frac{J_{f^\varepsilon}}{|J_{f^\varepsilon}|} \text{ with } 
J_{f^{\varepsilon}} = \int_{\mathbb{S}^{n-1}}\omega f^{\varepsilon}(x,\omega,t)\,d\omega,\\
\mathbf\Omega_{f^\varepsilon}^1 &=\frac{k_1^0}{|J_{f^\varepsilon}|}\mathcal{P}_{\mathbf\Omega_{f^\varepsilon}^\perp}\Delta_xJ_{f^\varepsilon}
\text{ with } k_1^0 = \frac{R_1^2}{2n}\frac{\int_{\mathbb{R}^n}K_1(|z|)|z|^2\,dz}{\int_{\mathbb{R}^n}K_1(|z|)\,dz},\\
\rho_{f^\varepsilon}(x,t) &= \int_{\mathbb{S}^{n-1}}f^\varepsilon(x,\omega,t)\,d\omega. \label{scaled_4}
\end{align}
\subsection{The hydrodynamics model.} \label{app2}
This section derives the hydrodynamic model for the local density $\rho_f$ and the local mean orientation $\mathbf\Omega_f$ 
which will be valid at the macroscopic scale 
by taking the limit of the system (\ref{scaled_1})-(\ref{scaled_4}) as $\varepsilon \to 0$.

We first introduce the von Mises-Fisher (VMF) probability distribution $M_\mathbf\Omega(\omega)$ on $\mathbb{S}^{n-1}$ associated to a given $\mathbf\Omega \in\mathbb{S}^{n-1}$:
\begin{equation}
M_{\mathbf\Omega} = \frac{1}{Z} \exp\left(\frac{\omega\cdot\mathbf\Omega}{d}\right) 
\text{ with } Z = \int_{\mathbb{S}^{n-1}}\exp\left(\frac{\omega\cdot\mathbf\Omega}{d}\right)\,d\omega.
\end{equation}
The main result in this section is the following theorem:
\begin{theorem}
Let $f^\varepsilon$ be the solution of (\ref{scaled_1})-(\ref{scaled_4}). Assume that there exists $f$ such that
\begin{equation}
\lim_{\varepsilon\to 0} f^\varepsilon = f
\end{equation}
pointwise and the limit holds for its derivatives. Then there exist $\rho(x,t)$ and $\mathbf\Omega(x,t)$ such that
\begin{equation}
f(x,\omega,t) = \rho(x,t)M_{\mathbf\Omega(x,t)}(\omega)
\end{equation}
and $(\rho, \mathbf\Omega)$ are the solutions of
\begin{subequations}\label{SOH_model}
\begin{numcases}{}
\partial_t \rho + \nabla_x\cdot(c_1v(\rho)\rho\mathbf\Omega) = 0,\\
\rho[\partial_t\mathbf\Omega + (c_2v(\rho)\mathbf\Omega\cdot\nabla_x)\mathbf\Omega] + d\mathcal{P}_{\mathbf\Omega^\perp}\nabla_x(v(\rho)\rho)
= \gamma\mathcal{P}_{\mathbf\Omega^\perp}\Delta_x(\rho\mathbf\Omega), \\
|\mathbf\Omega| = 1,
\end{numcases}
\end{subequations}
where
\begin{align*}
c_1(d) &= \int_{\mathbb{S}^{n-1}}M_\mathbf\Omega(\omega)(\omega\cdot\mathbf\Omega)\,d\omega,\\
c_2(d) &= \frac{\langle\sin^2\theta\cos\theta h\rangle_{M_\mathbf\Omega}}{\langle\sin^2\theta h\rangle_{M_\mathbf\Omega}},\\
\gamma &= k_1^0[(n-1)d + c_2]. 
\end{align*}
Here $\langle\cdot\rangle_{M_\mathbf\Omega}$ denotes the integration with the weight function $M_{\mathbf\Omega}$ with respect to $\theta$ on the domain $[0,\pi]$.
\end{theorem}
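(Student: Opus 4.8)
The plan is to follow the Generalized Collision Invariance (GCI) programme of \cite{Degond_Motsch_M3AS08}, exploiting the fact that the density-dependent factor $v(\rho_{f^\varepsilon})$ enters only as a scalar coefficient of the spatial transport term and is inert under the collision operator $\mathcal{Q}$. In particular $v(\rho_{f^\varepsilon})$ is independent of $\omega$, so it pulls out of every velocity-space moment, and the entire equilibrium/invariant structure of the constant-speed case carries over; this is precisely why the derivation is ``analogous'' to \cite{Degond_Motsch_M3AS08}.

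First I would pass to the limit in the scaled equation (\ref{scaled_1}). Writing it as $\mathcal{Q}(f^\varepsilon) = \varepsilon[\cdots]$ and using the assumed pointwise convergence of $f^\varepsilon$ together with its derivatives, the right-hand side is $O(\varepsilon)$, so the limit $f$ satisfies $\mathcal{Q}(f)=0$. To characterise these equilibria I would recast the operator in divergence form $\mathcal{Q}(f) = d\,\nabla_\omega\cdot\big[M_{\mathbf{\Omega}_f}\nabla_\omega(f/M_{\mathbf{\Omega}_f})\big]$, using $\nabla_\omega\log M_{\mathbf{\Omega}} = \frac{1}{d}\,\mathcal{P}_{\omega^\perp}\mathbf{\Omega}$. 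Testing $\mathcal{Q}(f)=0$ against $f/M_{\mathbf{\Omega}_f}$ and integrating by parts on the sphere forces the nonnegative Dirichlet form $d\int M_{\mathbf{\Omega}_f}|\nabla_\omega(f/M_{\mathbf{\Omega}_f})|^2\,d\omega$ to vanish, hence $f/M_{\mathbf{\Omega}_f}$ is constant in $\omega$, i.e. $f = \rho\,M_{\mathbf{\Omega}_f}$; a short consistency check (the flux $J_f$ of a von Mises--Fisher distribution is parallel to its concentration axis) shows that $\mathbf{\Omega}_f$ is indeed the direction of $J_f$, giving the claimed form $f = \rho M_{\mathbf{\Omega}}$.

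Next I would extract the two macroscopic equations by testing (\ref{scaled_1}) against collision invariants and then dividing by $\varepsilon$. Integrating over $\omega\in\mathbb{S}^{n-1}$ kills $\int\mathcal{Q}(f^\varepsilon)\,d\omega$ and both $\omega$-divergence terms, leaving the exact balance $\partial_t\rho_{f^\varepsilon} + \nabla_x\cdot(v(\rho_{f^\varepsilon})J_{f^\varepsilon}) = 0$; letting $\varepsilon\to0$ and using $J_{\rho M_{\mathbf{\Omega}}} = c_1\rho\mathbf{\Omega}$ with $c_1 = \int_{\mathbb{S}^{n-1}}M_{\mathbf{\Omega}}(\omega\cdot\mathbf{\Omega})\,d\omega$ yields the continuity equation. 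Since $\omega$ is not conserved, for the orientation equation I would instead use the GCI $\psi_{\mathbf{\Omega}}$ associated with the current flux direction: the $(n-1)$-dimensional family solving the adjoint problem $\mathcal{Q}_{\mathbf{\Omega}}^*\psi = \mathbf{b}\cdot\mathcal{P}_{\mathbf{\Omega}^\perp}\omega$, which reduces to a Sturm--Liouville ODE for a scalar profile $h(\theta)$ in the polar angle $\theta = \arccos(\omega\cdot\mathbf{\Omega})$. By construction $\int\mathcal{Q}(f^\varepsilon)\,\psi_{\mathbf{\Omega}_{f^\varepsilon}}\,d\omega = 0$, so testing (\ref{scaled_1}) against $\psi_{\mathbf{\Omega}_{f^\varepsilon}}$, dividing by $\varepsilon$ and passing to the limit leaves a balance among the transport moment (producing $c_2$), the pressure-type term $d\,\mathcal{P}_{\mathbf{\Omega}^\perp}\nabla_x(v(\rho)\rho)$, and the $O(\varepsilon)$ correction $\mathbf{\Omega}_{f^\varepsilon}^1$ carrying $\Delta_x J$ (producing the viscous right-hand side with $\gamma = k_1^0[(n-1)d + c_2]$). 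The projections $\mathcal{P}_{\mathbf{\Omega}^\perp}$ appearing throughout guarantee that $|\mathbf{\Omega}|=1$ is propagated.

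The main obstacle is the GCI step. One must show the adjoint problem on $\mathbb{S}^{n-1}$ is solvable via the Fredholm alternative (the range condition being orthogonality to $\ker\mathcal{Q}_{\mathbf{\Omega}}$, i.e. to constants, which holds because $\mathbf{b}\cdot\mathcal{P}_{\mathbf{\Omega}^\perp}\omega$ is odd in the directions transverse to $\mathbf{\Omega}$), identify the invariant explicitly enough through $h(\theta)$ to compute moments, and then evaluate the weighted angular integrals $\langle\sin^2\theta\cos\theta\, h\rangle_{M_{\mathbf{\Omega}}}$ and $\langle\sin^2\theta\, h\rangle_{M_{\mathbf{\Omega}}}$ that define $c_2$, while tracking the $\varepsilon$-order term carefully so that $\gamma$ assembles correctly. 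These computations are delicate but structurally identical to \cite{Degond_Motsch_M3AS08}; the only genuine novelty is checking that the $\omega$-independent factor $v(\rho_{f^\varepsilon})$ separates cleanly from each moment and that $\rho_{f^\varepsilon}\to\rho$, so that $v$ survives the limit as a pointwise coefficient $v(\rho)$.
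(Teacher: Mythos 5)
Your proposal follows exactly the paper's own three-step programme (characterisation of the equilibria of $\mathcal{Q}$ as von Mises--Fisher distributions, construction of the Generalized Collision Invariants via the Sturm--Liouville problem for $h(\theta)$, and passage to the limit by testing against $1$ and against $\psi_{\mathbf{\Omega}_{f^\varepsilon}}$), and your observation that the only new ingredient is the $\omega$-independent factor $v(\rho_{f^\varepsilon})$ pulling out of every moment is precisely the point the paper relies on when it declares the derivation analogous to \cite{Degond_Motsch_M3AS08}. The approach is correct and essentially identical to the paper's sketch, which likewise defers the Dirichlet-form and Fredholm details to \cite{Degond_Motsch_M3AS08,Frouvelle_M3AS12}.
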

\begin{proof}
The proof is divided into three steps: (i) the determination of the equilibria; (ii) the Generalized Collision Invariants; 
(iii) the hydrodynamic limit. The three subsections below give a sketch of the proof.

{\bf Step 1.} The equilibrium states, i.e., the null space of $\mathcal{Q}$.
\begin{definition}
The set $\mathcal{E}$ of the equilibrium of $\mathcal{Q}$ is given by
\begin{equation}
\mathcal{E} = \{f\in H^1(\mathbb{S}^{n-1})| f\geq 0 \text{ and } \mathcal{Q}(f) = 0\}.
\end{equation}
\end{definition}
We will prove that the set $\mathcal{E}$ consists of the VMF distribution. 
More precisely, we have the following result:
\begin{lemma}\label{lemma_E}
Under certain regularity assumptions, the set of the equilibria $\mathcal{E}$ is given by
\begin{equation}
\mathcal{E} = \{ f(\omega) = \rho M_{\mathbf\Omega}(\omega) \text{ for arbitrary } \rho \geq 0\}.
\end{equation}
\end{lemma}
\begin{proof}
Please refer to \cite{Degond_Motsch_M3AS08} for the proof of Lemma \ref{lemma_E}. 
\end{proof}

{\bf Step 2.} The generalized collision invariants (GCI).
\begin{definition}
A collision invariant (CI) is a function $\psi(\omega)$ such that for any function $f(\omega) \geq 0$ with sufficient regularity we have
\begin{equation}
\int_{\mathbb{S}^{n-1}} \mathcal{Q}(f)\psi\,d\omega = 0.
\end{equation}
We denote by $\mathcal{C}$ the set of CI, the set $\mathcal{C}$ is a vector space. 
\end{definition}
Due to the lack of physical conservation laws except for the total mass, the set of CI is not large enough to allow us to derive the evolution of the macroscopic quantities $\rho$ and $\mathbf\Omega$.
To overcome this difficulty, a weaker concept of collision invariant, the so-called ``Generalized Collision Invariant" (GCI) has been introduced in \cite{Degond_Motsch_M3AS08}. 
We define the collision operator $\mathcal{Q}(\mathbf\Omega, f)$ such that for a given vector $\mathbf\Omega \in \mathbb{S}^{n-1}$, we have
\begin{equation}
\mathcal{Q}(\mathbf\Omega, f) = \nabla_{\omega}\cdot\left[M_{\mathbf\Omega}\nabla_\omega\left(\frac{f}{M_\mathbf\Omega}\right)\right].
\end{equation}
Notice that
\begin{equation}
\mathcal{Q}(f) = \mathcal{Q}(\mathbf\Omega_f,f).
\end{equation}
Then we have
\begin{definition}
Given $\mathbf\Omega \in \mathbb{S}^{n-1}$, a Generalized Collision Invariant (GCI) associated to $\mathbf\Omega$ is a function $\psi \in H^1(\mathbb{S}^{n-1})$ satisfying
\begin{equation}
\int_{\mathbb{S}^{n-1}}\mathcal{Q}(\mathbf\Omega,f)\psi(\omega)\,d\omega = 0\quad
\forall f\in H^1(\mathbb{S}^{n-1}) \text{ with } \mathbf\Omega_f = \pm\mathbf\Omega.
\end{equation}
The set of GCIs associated to $\mathbf\Omega$ is denoted by $\mathcal{C}_\mathbf\Omega$.
\end{definition}
The following lemma characterizes the set of generalized collision invariants.
\begin{lemma}\label{lemma_GCI}
The set $\mathcal{C}_\mathbf\Omega$ is given by
\begin{equation}
\mathcal{C}_\mathbf\Omega = \{ h(\omega\cdot\mathbf\Omega)\beta\cdot\omega + C\text{ where } \beta\in\mathbb{R}^n \text{ with } \beta\cdot\mathbf\Omega = 0
\text{ and } C\in\mathbb{R} \text{ is arbitrary.}\},
\end{equation}
and the scalar function $h(\cdot)$ is such that $h(\cos\theta) = \frac{g(\theta)}{\sin\theta}$ and $g(\theta)$ is the unique solution in the space 
\[
V = \{ g| (n-2)\sin^{\frac{n}{2}-2}\theta g \in L^2(0,\pi), \sin^{\frac{n}{2}-1}\theta g \in H_0^1(0,\pi)\}
\]
of the problem
\begin{equation}\label{ODE_g}
-\sin^{2-n}\theta e^{-\frac{\cos\theta}{d}}\frac{d}{d\theta}\left(\sin^{n-2}\theta e^{\frac{\cos\theta}{d}}\frac{dg}{d\theta}\right)
+\frac{n-2}{\sin^2\theta}g = \sin\theta.
\end{equation}
The set $\mathcal{C}_\mathbf\Omega$ is an $n$-dimensional vector space.
\end{lemma}
\begin{proof}
Please refer to \cite{Degond_Motsch_M3AS08,Frouvelle_M3AS12} for the proof of the above lemma.
\end{proof}

{\bf Step 3.} The hydrodynamic limit.

Integrating Eq. \eqref{scaled_1} against the collision invariants and taking the formal limit as $\varepsilon\to 0$ will yield the hydrodynamic system \eqref{SOH_model} and the formulas for the parameters $c_1(d), c_2(d)$ and $\gamma$.
\end{proof}

\section{The splitting scheme of solving the relaxation model}
\label{app_split}
We start from Eq. \eqref{SOH_relax}. Dropping the superscript $\eta$ for simplicity, we implement the splitting scheme in two steps.
\begin{enumerate}
\item[] Step 1. Solve the conservative part:
\begin{subequations}\label{split_1}
\begin{numcases}{}
\partial_t \rho + \nabla_{\boldsymbol{x}}\cdot(c_1v(\rho)\rho\mathbf\Omega) = 0,\\
\partial_t(\rho\mathbf\Omega) + \nabla_{\boldsymbol{x}}\cdot(c_2v(\rho)\rho\mathbf\Omega\otimes\mathbf\Omega) 
+ d\nabla_{\boldsymbol{x}}(v(\rho)\rho) - \gamma\Delta_{\boldsymbol{x}}(\rho\mathbf\Omega) = 0.
\end{numcases}
\end{subequations}

\item[] Step 2. Solve the relaxation part:
\begin{subequations}\label{split_2}
\begin{numcases}{}
\partial_t \rho = 0, \\
\partial_t(\rho\mathbf\Omega) = \frac{\rho}{\eta}(1-|\mathbf\Omega|^2)\mathbf\Omega.
\end{numcases}
\end{subequations}
\end{enumerate}

Introduce two functions $p$ and $q$ such that $p = \rho\mathbf\Omega_x$ and $q = \rho\mathbf\Omega_y$
where $\mathbf\Omega_x$ and $\mathbf\Omega_y$ are the two components of $\mathbf\Omega$.
The system (\ref{split_1}) can be written in vector form:
\begin{equation}\label{eq_Q}
\partial_t Q + \partial_x(F(Q, \partial_x Q)) + \partial_y(G(Q, \partial_y Q)) = 0,
\end{equation}
where
\begin{align*}
Q = \left(\begin{array}{c} \rho \\ p \\ q \end{array}\right),\quad
F(Q, \partial_x Q) = \left(\begin{array}{c}
c_1v(\rho)p \\ c_2\frac{v(\rho)}{\rho}p^2 + dv(\rho)\rho - \gamma\partial_xp\\
c_2\frac{v(\rho)}{\rho}pq - \gamma\partial_xq\end{array}\right),\\
G(Q, \partial_y Q) = \left(\begin{array}{c}
c_1v(\rho)q\\c_2\frac{v(\rho)}{\rho}pq - \gamma\partial_yp\\
c_2\frac{v(\rho)}{\rho}q^2 + dv(\rho)\rho - \gamma\partial_yq\end{array}\right).
\end{align*}

The explicit time discretization for Eq. \eqref{eq_Q} is given by
\begin{align*}
Q_{i,j}^* = Q_{i,j}^n - \frac{\Delta t}{\Delta x}\left(F_{i+\frac12,j}^n - F_{i-\frac12,j}^n\right)
 - \frac{\Delta t}{\Delta y}\left(G_{i,j+\frac12}^n - G_{i,j+\frac12}^n\right),
\end{align*}
where the numerical flux $F_{i+\frac12,j}$ is defined as
\begin{align*}
F_{i+\frac12,j} = \frac{F(Q_{i,j}) + F(Q_{i+1,j})}{2}
-\frac12 P^2(\frac{\partial F}{\partial Q}(\bar{Q}_{i,j},\overline{\partial_xQ}_{i,j}))(Q_{i+1,j} - Q_{i,j})
\end{align*}
with
\[
\bar{Q}_{i,j} = \frac{Q_{i,j} + Q_{i+1,j}}{2}, \quad
\partial_xQ_{i,j} = \frac{Q_{i+1,j} - Q_{i,j}}{\Delta x}, \quad
\overline{\partial_xQ}_{i,j} = \frac{\partial_xQ_{i,j} + \partial_xQ_{i+1,j}}{2}.
\]
$G_{i,j+\frac12}$ is defined in the similar manner. 
Here $P^2(\frac{\partial F}{\partial Q})$ is a second degree polynomial of a matrix at the intermediate state of $(Q_{i,j},\partial_xQ_{i,j})$ and $(Q_{i+1,j},\partial_xQ_{i+1,j})$; see \cite{Degond_etal_1999} for more details.

\section{The Discrete Fourier Transform for the viscous system \eqref{sys_FT}}
\label{viscous_DFT}
Let $N_x\times N_y$ denote the mesh size over the domain $[0,L_x]\times[0,L_y]$ and $(\Delta x, \Delta y)$ the uniform mesh spacing. 
Each nonoverlapping computational cell is centered at $(x_j, y_k) = (\left(j-\frac12\right)\Delta x, \left(k-\frac12\right)\Delta y)$ for $1\leq j\leq N_x, 1\leq k \leq N_y$.
We study the spatial variable $x$ only and apply the Discrete Fourier Transform on $(\rho_\sigma(x_j,t), \theta_\sigma(x_j,t))$:
\begin{align*}
&\rho_\sigma(x_j,t) = \frac{1}{N_x}\sum_{\xi=0}^{N_x-1} \hat\rho_\sigma(\xi,t)e^{i\frac{2\pi\xi(j-1)}{N_x}}
 = \frac{1}{N_x}\sum_{\xi=0}^{N_x-1} \hat\rho_\sigma(\xi,t)e^{i\frac{2\pi\xi}{L_x}\left(x_j - \frac{\Delta x}{2}\right)},\\
&\theta_\sigma(x_j,t) = \frac{1}{N_x}\sum_{\xi=0}^{N_x-1} \hat\theta_\sigma(\xi,t)e^{i\frac{2\pi\xi(j-1)}{N_x}}
 = \frac{1}{N_x}\sum_{\xi=0}^{N_x-1} \hat\theta_\sigma(\xi,t)e^{i\frac{2\pi\xi}{L_x}\left(x_j - \frac{\Delta x}{2}\right)},
\end{align*}
where
\[
\hat{\rho}_\sigma(\xi,t) = \sum_{j=1}^{N_x}\rho_\sigma(x_j,t)e^{-i\frac{2\pi(j-1)\xi}{N_x}}, \qquad
\hat{\theta}_\sigma(\xi,t) = \sum_{j=1}^{N_x}\theta_\sigma(x_j,t)e^{-i\frac{2\pi(j-1)\xi}{N_x}}.
\]
It follows that
\begin{align*}
\partial_t\left(\begin{array}{c}\hat\rho_\sigma\\\hat\theta_\sigma\end{array}\right)
+i\frac{2\pi\xi}{L_x} A \left(\begin{array}{c}\hat\rho_\sigma\\\hat\theta_\sigma\end{array}\right) = \boldsymbol{0},
\end{align*}
where the matrix
\begin{align*}
A = \left(\begin{array}{cc}
c_1\tilde{v}'(\rho_s)\cos\theta_s & -c_1\tilde{v}(\rho_s)\sin\theta_s \\
-d\frac{\tilde{v}'(\rho_s)}{\rho_s}\sin\theta_s &
- i\frac{2\pi\xi\gamma}{L_x} + c_2\frac{\tilde{v}(\rho_s)}{\rho_s}\cos\theta_s
\end{array}\right).
\end{align*}
Let $\lambda = \mu + i\nu$ be the eigenvalue of $A$. Solving $|A - \lambda {\rm Id}| = 0$ gives
\begin{align*}
\lambda 
&=\frac12\left[\Big(c_1\tilde{v}'(\rho_s) + c_2\frac{\tilde{v}(\rho_s)}{\rho_s}\Big)\cos\theta_s \pm {\rm Re}\sqrt{\Delta}
+ i\Big(\pm{\rm Im}\sqrt{\Delta} - \frac{2\pi\xi\gamma}{L_x}\Big)\right],
\end{align*}
where $\sqrt{\Delta}$ denote the square root of the discriminant, the complex number $\Delta$:
\begin{align*}
\Delta = \left(\Big(c_1\tilde{v}'(\rho_s)- c_2\frac{\tilde{v}(\rho_s)}{\rho_s}\Big)\cos\theta_s
+ i\frac{2\pi\xi\gamma}{L_x}
\right)^2 
+ 4c_1d \frac{\tilde{v}(\rho_s)\tilde{v}'(\rho_s)}{\rho_s}\sin^2\theta_s. 
\end{align*}
\end{appendices}


\end{document}